\documentclass[letterpaper,twocolumn,10pt]{article}
\usepackage{usenix-2020-09}

\usepackage{filecontents}
\usepackage{tikz}
\usetikzlibrary{graphs,graphs.standard,quotes}

\usepackage{cite}
\usepackage[ruled,vlined]{algorithm2e}
\usepackage{amsmath,amssymb,amsfonts}
\usepackage{algorithmicx}
\usepackage{algpseudocode}
\usepackage{algorithm2e}
\usepackage{graphicx}
\usepackage{textcomp}
\usepackage{xcolor}
\usepackage{tabu}
\usepackage{tabularx}
\usepackage{pdflscape}
\usepackage[font={small,it}]{caption}
\usepackage{wrapfig}
\usepackage{subcaption}
\usepackage{mathtools}
\usepackage{amsthm}
\newtheorem{theorem}{Theorem}
\usepackage[toc,page]{appendix}
\usepackage{makecell}
\usepackage{hyperref}
\usepackage[toc,page]{appendix}

\usepackage{filecontents}
\usepackage{tikz}
\usetikzlibrary{graphs,graphs.standard,quotes}
\usepackage{amsmath,amssymb,amsfonts}
\usepackage[ruled,vlined]{algorithm2e}
\usepackage{algorithmicx}
\usepackage{algpseudocode}
\usepackage{algorithm2e}
\usepackage{graphicx}
\usepackage{textcomp}
\usepackage{xcolor}
\usepackage{tabu}
\usepackage{tabularx}
\usepackage{pdflscape}
\usepackage[font={small,it}]{caption}
\usepackage{wrapfig}
\usepackage{subcaption}
\usepackage{mathtools}
\usepackage{amsthm}
\usepackage{hyperref}
\usepackage[toc,page]{appendix}
\usepackage{makecell}
\usepackage{xcolor,colortbl}

\newcommand{\rom}[1]{\uppercase\expandafter{\romannumeral #1\relax}}
\SetKwRepeat{Do}{do}{while}

\begin{document}
\date{}

\title{\Large \bf SAFELearning: Enable Backdoor Detectability In Federated Learning With Secure Aggregation}

\author{
{\rm Zhuosheng Zhang}\\
Stevens Institute of Technology\\
zzhang97@stevens.edu

\and
{\rm Jiarui Li}\\
Stevens Institute of Technology\\
jli148@stevens.edu

\and
 {\rm Shucheng Yu}\\
Stevens Institute of Technology\\
shucheng.yu@stevens.edu

\and
 {\rm Christian Makaya}\\
 wchrismak@gmail.com
} 

\maketitle

\begin{abstract}
For model privacy, local model parameters in federated learning shall be obfuscated before sent to the remote aggregator. This technique is referred to as \emph{secure aggregation}. However, secure aggregation makes model poisoning attacks such backdooring more convenient considering that existing anomaly detection methods mostly require access to plaintext local models. This paper proposes SAFELearning which supports backdoor detection for secure aggregation. We achieve this through two new primitives - \emph{oblivious random grouping (ORG)} and \emph{partial parameter disclosure (PPD)}. ORG partitions participants into one-time random subgroups with group configurations oblivious to participants; PPD allows secure partial disclosure of aggregated subgroup models for anomaly detection without leaking individual model privacy. SAFELearning can significantly reduce backdoor model accuracy without jeopardizing the main task accuracy under common backdoor strategies. Extensive experiments show SAFELearning is robust against malicious and faulty participants, whilst being more efficient than the state-of-art secure aggregation protocol in terms of both communication and computation costs.
\end{abstract}




\maketitle

\section{Introduction}
Federated learning \cite{konevcny2016federated} becomes increasingly attractive in emerging applications \cite{Google-Fed,Google-Gboard} with the proliferation of Internet of Things and edge computing. As compared to centralized learning (i.e., training models at the central server), federated learning allows participants (i.e., users) to locally train models with their private data sets and only transmit the trained model parameters (or gradients) to the remote server. The latter aggregates local parameters to obtain a global model and returns it to users for next iteration. However, recent research has discovered that disclosing local models also poses threats to data privacy, either directly or under subtle attacks such as reconstruction attacks and model inversion attacks \cite{RC-2019}. To protect local models against disclosure, nobody except for the participant shall know her own local model while the global model will be revealed to all participants. This problem is known as \emph{Secure Aggregation} which can be generally realized using cryptographic primitives such as secure multiparty computation (SMC) and homomorphic encryption (HE). Despite of recent progresses, existing SMC or HE constructions are still not fully ready for complex networks (e.g., in deep learning) because of the high communication and/or computing overhead. Differential privacy \cite{chan2012privacy,shi2011privacy} provides efficient alternative solutions to model privacy protection. However, it remains a challenge to maintain an appropriate trade-off between privacy and model quality in deep learning tasks.   

Pairwise masking \cite{AC11, GX15} has recently caused attention for its efficiency in secure aggregation. Specifically, assume $A$ and $B$ have respective parameters $x_a$ and $x_b$ and a shared pairwise mask $s_{a,b}$. They simply hide their parameters by uploading $y_a = x_a + s_{a,b}$ and $y_b = x_b - s_{a,b}$ (with an appropriate modulus) respectively. The shared mask will be cancelled during aggregation without distorting the parameters. While the shared mask can be conveniently generated using well-known key exchange protocols and pseudo random generators, the main problem is to deal with dropout users who become offline in the middle of the process and make the shared mask not cancellable. To address this problem, Bonawitz et. al. \cite{bonawitz2017practical} proposed a protocol that allows online users to recover offline users' secrets through secret sharing. Without heavy cryptographic primitives, \cite{bonawitz2017practical} supports secure (linear) aggregation for federated learning without accuracy loss as long as the number of malicious users is less than a threshold number $t$.  

\begin{figure}[htbp]
\centerline{\includegraphics[width=0.7\linewidth]{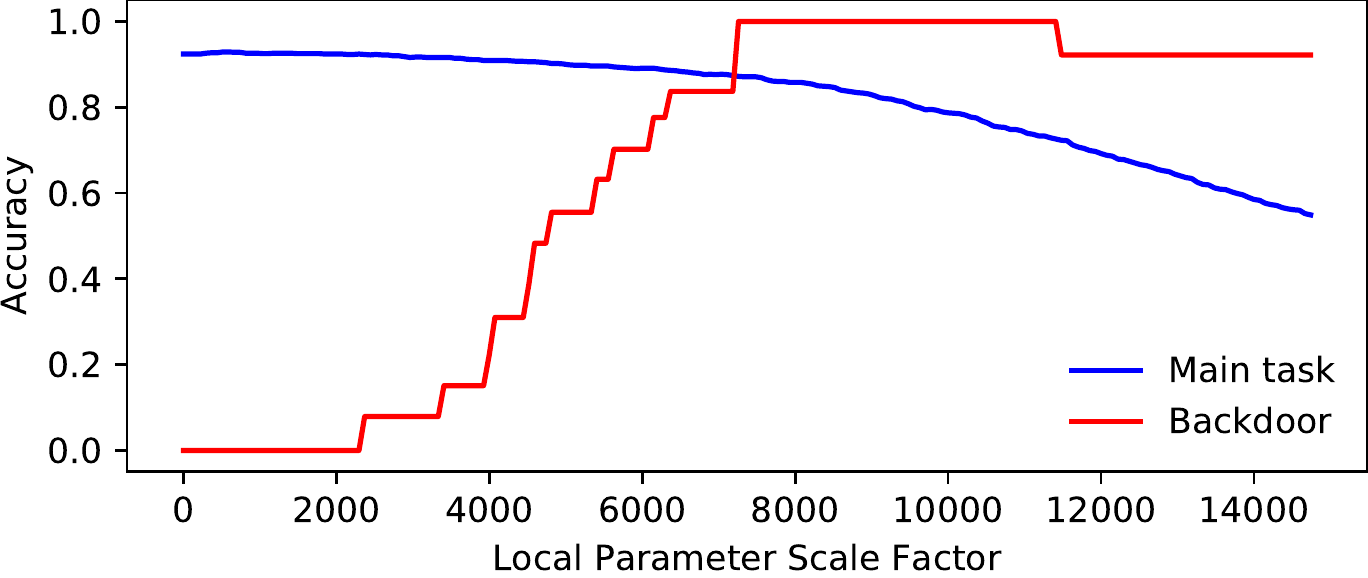}}
\vspace{-2.5mm}
\caption{Backdoor/main model accuracy of \cite{bonawitz2017practical} with one attacker.}\label{f:backdoor}
\vspace{-4.5mm}
\end{figure}

One outstanding problem with secure aggregation is that model poisoning attacks are made easier. This is because local models are no longer revealed to the aggregator, which is required in existing Byzantine attack detection techniques \cite{fung2018mitigating}.  By uploading scaled erroneous parameters, attackers can launch model poisoning attacks through label-flipping \cite{biggio2012poisoning} or model backdoor \cite{bagdasaryan2018backdoor}, with the aim to manipulate the global model at the attacker's will. As shown in Fig. \ref{f:backdoor}, a backdoor can be conveniently inserted to the global model in \cite{bonawitz2017practical} even if one single attacker is presented. To our best knowledge, it still remains a challenge to simultaneously support backdoor detection and secure aggregation in federated learning.


In this paper we design a new protocol, namely SAFELearning, to address backdoor attacks in secure aggregation. We achieve this with two new primitives - \emph{oblivious random grouping (ORG)} and \emph{partial parameter disclosure (PPD)}. ORG partitions users into one-time \emph{random} subgroups at each iteration with both group membership information and subgroup configurations (including the number of subgroups) \emph{oblivious} to users (membership information is also oblivious to the aggregation server). Neither users nor the aggregation server can predict or determine which user is assigned to which subgroup before local parameters are uploaded. This property forces attackers to work independently even if they are willing to collude. By making subgroup configurations oblivious, we further thwart opportunistic attackers who independently manipulates local parameters based on statistical estimation of the distribution of attackers in subgroups. PPD supports secure partial disclosure of aggregated models of subgroups by limiting the privacy leakage to the level that the global model can leak. With ORG and PPD, the aggregation server is able to randomly evaluate subgroup models for anomaly detection without additional privacy leakage.  

As compared to Bonawitz et. al., the computational complexity of SAFELearning is reduced from $O(N^2+mN)$ (at user side) and $O(mN^2)$ (at server side) to $O(N+n^2+m)$ and $O(mN+nN)$, respectively, where $N$ is the total number of users, $n$ ($\ll N$) the number of users in subgroups and $m$ the size of the model. This is attributed to the hierarchical subgroup design in ORG. SAFELearning is provably secure under the simulation-based model. We conducted extensive experiments with ResNet-18 network over CIFAR-10 dataset under well-known backdoor strategies. Experimental results show that SAFELearning is able to completely defeat backdoor attacks (i.e., reducing backdoor accuracy to 0) in most cases when the percentage of attackers is relatively low. Meanwhile, SAFELearning exhibits similar convergence rate as compared to vanilla federated learning. Table \ref{tab:cmp} compares SAFELearning with existing representative secure federated learning techniques.   
\vspace{-1em}
\begin{table}[!htb]
    \caption{Related secure federated learning techniques}\label{tab:cmp}
    \small
    \vspace{-2mm}
    \centering
    \begin{tabular}{ |c|c|c|c|} 
        \hline
        \thead{Method}&\thead{Privacy\\ Preservation}& \thead{Backdoor\\ detection}&\thead{Computation\\ complexity}\\
        \hline
        \makecell{SecureFL\cite{bonawitz2017practical}} &\cellcolor{green!25} Yes & \cellcolor{red!25}No& \cellcolor{red!25}$O(mN^2)$\\
        \hline
        \makecell{ Detection\cite{blanchard2017machine}\cite{fung2018mitigating}}&\cellcolor{red!25}No&\cellcolor{green!25}Yes&\cellcolor{green!25}$O(N^2)$\\
        \hline
        SAFELearning&\cellcolor{green!25}Yes&\cellcolor{green!25}Yes&\cellcolor{green!25}$O(mN+nN)$\\
        \hline
    \end{tabular}
    
    \vspace{-3mm}    
\end{table}  

The main contributions of this paper are as follows: 1) we design a new secure aggregation protocol that simultaneously supports backdoor attack detection and model privacy; to our best knowledge, this work is among the first that can detect model-poisoning attacks on encrypted model parameters for federated learning; 2) the proposed scheme significantly improves system scalability in both computational and communication complexities as compared to state-of-the-art secure aggregation techniques; 3) SAFELearning is provably secure under the simulation-based model. 

This rest of the paper is organized as follows. Section \ref{s:System} presents models and technical preliminaries. An overview to our design is described in Section \ref{s:overview}. Section \ref{s:tree} elaborates our ORG primitive and secure aggregation protocol. Section \ref{s:malicious} explains the PPD design and our backdoor detection mechanism. Section \ref{s:Eval} presents experimental evaluation of our backdoor detection algorithm. Section \ref{s:Related} review related work. We conclude the paper in Section \ref{s:Conclusion}.

\section{Models and Preliminaries}\label{s:System}

\subsection{System Model}
We assume two types of parties in a federated learning system: an aggregation server $S$ and a set of $N$ participating users $U$. The server holds a global model $X_i$ of size $m$ and each user $u\in U$ possesses a private training data set. Users train the global model shared by the server with their private training data at each iteration and upload the local model parameters to the server. The server aggregates local parameters and compute $\sum_{u \in U} x_u$, where $x_u$ (also of size $m$) is the local model parameter trained by $u$ using $X_i$ and his local data. The server returns the latest global model to each user at the end of each iteration. The server communicates with each user through a secure (private and authenticated) channel. A trusted third party authenticates each user and generates a private-public key pair $<s_u^{SK},s_u^{PK}>$ for her before the protocol execution. Each user $u$ is assigned a unique ``logical identity" $Id_u$ in a full order. Users may drop out at any phase of the protocol but at least $t$ users are online for any randomly selected subgroup of $n$ users at any time. In the training phase, we assume the server uses the baseline federated learning algorithm FedSGD\cite{mcmahan2017communication} with the following update rule, where $\eta$ is the learning rate: $X_{i+1} = X_{i} + \frac{\eta}{N}\sum\limits_{x_u\in U}(x_u-X_i)$. 


%
%
%

\subsection{Attack Model}\label{ss:attackmodel}
We consider two types of attackers who have distinct objectives: \emph{type \rom{1} attackers} are interested in learning the value of local parameters to comprise model privacy; and \emph{type \rom{2} attackers} are motivated to poison the global model to generate a model with semantic backdoors. The global model in type \rom{2} attack shall exhibit a good main task accuracy (or ``main accuracy" for short) but also behave in a way at attacker's will on attacker-chosen backdoor inputs.

  \textbf{Type \rom{1} attackers:}
Both the server and malicious users are considered potential type \rom{1} attackers but the server is assumed honest but curious. The server may collude with malicious users to compromise model privacy of benign users. However, the ratio of users it can collude does not exceed $\frac{r}{N}$ for any randomly chosen subgroup of users. We assume the server is against type \rom{2} attackers since it may benefit from an accurate and robust global model. 

  \textbf{Type \rom{2} attackers:}
For type \rom{2} attackers, we follow similar assumptions of \cite{bagdasaryan2018backdoor}, i.e., attackers have full control over one or several users, including their local training data, models, and training settings such as learning rate, batch size and the number of epochs. However, attackers are cannot influence the behavior of benign users. The number of type \rom{2} attackers (malicious or compromised users) are assumed to be much smaller than benign users. Further more, we assume attackers can fully cooperate with each other, including sharing their secret keys, whenever necessary. To be specific, we define the objective of type \rom{2} attackers $U_{a}$ in each iteration is to replace the aggregated global model $X_{i+1}$ with a target model $X_{target}$ \footnote{$X_{target}$ can be a transitional model in continuous attacks.} as shown in (\ref{eq:atk2}). $U_h$ ($U_a$) refers to the set of benign users (attackers) in this iteration.
\begin{equation}\label{eq:atk2}
    X_{i+1}=X_i+\frac{\eta}{N}\sum\limits_{x_u\in U_{h}}(x_u-X_i)+\frac{\eta}{N}\sum\limits_{x_a\in U_{a}}(x_a-X_i)\approx X_{target}
\end{equation}

\subsection{Attacker Strategies}\label{ss:attackstra}
To yield a target model $X_{target}$, each attacker shall construct local parameters $x_a$ approximately base on the following:

\begin{align}\label{eq:atk4}
    \begin{split}
        \overline{x_a-X_i} &= \frac{N}{\eta |U_a|}(X_{target}-X_i)-\frac{1}{|U_a|}\sum\limits_{x_u\in U_{h}}(x_u-X_i)\\
     &\approx\frac{N}{\eta |U_a|}(X_{target} - X_i)
    \end{split}
\end{align}

To make the attack more effective and stealthy, the following possible strategies might be adopted by attackers :

  \textbf{Sybil attacks:} In order to reduce the scaling factor $\gamma = \frac{N}{\eta |U_a|}$ and make the attack stealthy,  attackers tend to deploy as many adversary participants as possible, e.g., by Sybil attacks \cite{douceur2002sybil} to increase $|U_a|$. In this work, we assume the trusted party authenticates each user when issuing public/private key pairs to thwart sybil attacks. 
    
 \textbf{Adaptive Attacks:} Strategic attackers can launch adaptive attacks \cite{bhagoji2019analyzing,bagdasaryan2018backdoor} by including the distance $Distance(X_{target},X_i)$ between $X_{target}$ and $X_i$ (either geometric distance or cosine distance of gradients) in the loss function while training $X_{target}$.  The purpose is to reduce the term $X_{target}-X_i$ in (\ref{eq:atk4}) to make attack more imperceptible. However, this term cannot be arbitrarily optimized if the attackers' objective is different from the main task of the model.
    
 \begin{figure}[htbp]
\centerline{\includegraphics[width=0.8\linewidth]{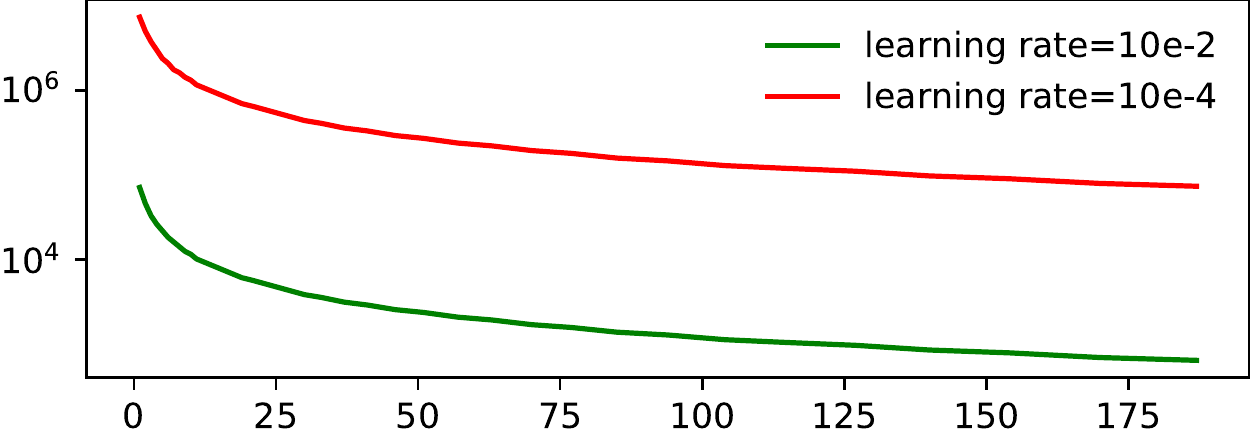}}
\vspace{-3mm}
\caption{Continuous attacks - scale factor vs. number of iterations}\label{fig:AtkS}
\vspace{-4mm}
\end{figure}
    
   \textbf{Continuous Attacks:} To make the attack stealthier, attackers may instead choose to continuously perform the attack through multiple iterations, expecting a relatively smaller scale factor at each iteration. However, our preliminary experiments in Fig. \ref{fig:AtkS} discovers that a minimum parameter scale factor is required for successful backdoor attacks. The scale factor is proportional to the inverse of learning rate which is controlled by the server. Previous research \cite{xie2019dba}\cite{bagdasaryan2018backdoor} also discovered that rationale attackers would launch the attack toward the convergence time. This is because if launched in early stages the backdoor tends to be "forgotten" and the main task will be jeopardized due to the disturbance of the attack.
   
Therefore, in this paper we assume attackers need to scale up their model to a certain minimum level even with continuous attacks. We also assume the attack will not be launched at the very beginning of training but rather in the middle of it to reduce the disturbance on the main task. We consider all the above attack strategies and their combinations except for unlimited Sybil attacks. Following existing research \cite{bonawitz2017practical}, we assume the attackers only count for a small portion of the entire population of users (i.e., $|U_a| \ll N$).

\subsection{Secure Aggregation with Pairwise Masking}\label{s:Secure}
Recently, pairwise additive masking \cite{bonawitz2017practical, AC11, GX15} has been utilized as an efficient cryptographic primitive for secure aggregation in federated learning even for complex deep networks.  As this paper utilizes pairwise masking for secure aggregation, we provide overview of a recent secure aggregation scheme \cite{bonawitz2017practical} as follows. 


Let $x_u$ denote the an $m$-dimensional vector of parameters that user $u\in U$ generates locally, where $U$ is the set of all users. Assume a total order of users and each user $u$ is assigned a private-public key pair $(s_u^{SK},s_u^{PK})$. Each pair of users $(u,v)$, $u<v$, can agree on a random common seed $s_{u,v}$ using Diffie-Hellman key agreement\cite{diffie1976new}. With the seed, a common mask vector ${PRG}(s_{u,v})$ can be computed by $u$ and $v$ using a pseudo-random generator ($PRG$) (e.g., a hash function). When $u$ obfuscates her parameter vector $x_u$ by adding the mask vector and $v$ subtracting it, the mask vector will be canceled when server aggregates the obfuscated parameter vectors without reveal their actual values. Specifically, each user $u$ obfuscates her parameter $x_u$ as following:
\begin{align*}
y_u = x_u&+ \sum\limits_{\forall v\in U:u<v}PRG(s_{u,v})\\
&- \sum\limits_{\forall v\in U:u>v}PRG(s_{v,u})\quad (mod\quad R)
\end{align*}
and sends $y_u$ to the server. Then the server computes:
\begin{align*}
z &= \sum\limits_{u\in U}\bigg(x_u+ \sum\limits_{v\in U:u<v}PRG(s_{u,v})- \sum\limits_{v\in U:u>v}PRG(s_{v,u})\bigg)\\
  &= \sum\limits_{u\in U}x_u\quad (mod\quad R)
\end{align*}

To address user dropouts, each user $u$ creates $N$ shares of her secret $s_u^{SK}$ using Shamir's $t$-out-of-$N$ secret sharing scheme and sends the shares  to the rest of users. Additionally, each user $u$ generates another random seed $b_u$ which is mainly to prevent the aggregation server from learning her parameter vectors in case she is delayed but her secret has been recovered by other users before she becomes online and sends out $y_u$. Random shares of $b_u$ are also generated and sent to other users. Each user $u$ obfuscates the parameter vector $x_u$ using a mask ${PRG}(b_u)$ in addition to the pairwise mask vector:
\begin{align*}
y_u = x_u&+PRG(b_u) + \sum\limits_{\forall v\in U:u<v}PRG(s_{u,v})\\&- \sum\limits_{\forall v\in U:u>v}PRG(s_{v,u})\quad (mod\quad R)
\end{align*}

In the unmask round, for each dropped user $v$, online users reveal the shares of $s_v^{SK}$; for each online users $u$, other online users reveal the shares of $b_u$. Then the server will be able to compute $PRG(s_{v,u})$ and $PRG(b_u)$ for any online user $u$ and cancel it out from the sum $z$ to get the aggregate model of online users. Note that an honest user $u$ never reveals either shares of $s_{j,v}$ or $b_v$ for any user $v$ before the unmask round.

The scheme saliently protects confidentiality of local parameters efficiently while taking into account user dropouts in practical distributed systems. However, the solution also makes it convenient for model poisoning attacks (e.g., backdoor attack). As pointed out in a recent work by Bagdasaryan et. al. \cite{bagdasaryan2018backdoor}, even a single malicious user is able to manipulate the global model through model replacement attack. This is possible because secure aggregation fully encrypts the users' local model, which allows the attacker to submit any erroneous parameters. As the aggregation server does not necessarily have access to validation datasets, such attack is difficult to detect by simple model validation. 

\section{Overview of SAFELearning Design}\label{s:overview}
\begin{figure*}[htbp]
\centerline{\includegraphics[width=\textwidth]{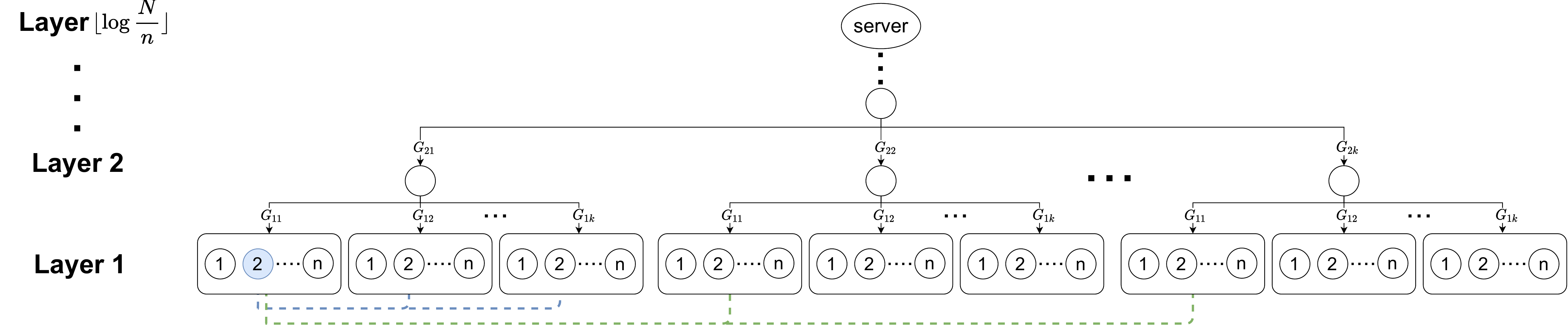}}
\vspace{-2mm}
\caption{The random tree structure (blue and green dotted lines indicate subgroups at levels 2 and 3 respectively). }
\label{Tree_1}
\vspace{-5mm}
\end{figure*}
Achieving both model privacy and backdoor detectability seems challenging because of the conflicting natures of the two objectives. To address this issue, in SAFELearning we organize users into subgroups with a hierarchical k-ary tree structure as shown in Fig. \ref{Tree_1}. At the leaves are equal-sized subgroups of $n$ users. At the aggregation server, the models of each subgroups are first aggregated; the aggregated model of the subgroups are further aggregated at the next level of subgroups; the process repeats recursively toward the root of the tree. It is trivial to show that the aggregated global model remain the same as exiting federated learning algorithms. With the tree structure, users in the same subgroup pairwise ``mask" each other during the secure aggregation process. To protect privacy of the aggregated model of each subgroup, a pairwise mask is also generated for each subgroup at internal layers of the tree as shown in dashed line in Fig. \ref{Tree_1}. Similarly, user secrets (i.e., $s_u^{SK}$ and $b_u$ for user $u$ as discussed in Section \ref{s:Secure}) can be securely shared within subgroups. Intuitively, secure aggregation with the tree structure provides similar level of protection to model privacy of users as in \cite{bonawitz2017practical}, and the user dropouts can be handled similarly as well. As secret sharing is within subgroups, we directly enjoy the benefit of reduced complexity, i.e., from $O(N^2)$ in \cite{bonawitz2017practical} to $O(n^2)$, because of the hierarchical group design.  

However, strategic type \rom{1} attackers can comprise local model privacy by deploying an overwhelming number of malicious users in a target subgroup. This is possible when $n \ll N$ and $|U_a| \not \ll n$. To prevent such attacks, we shall not allow either the server or any user to determine which users belong to which subgroups. Specifically, the assignment of users to subgroups shall be \emph{randomized} so that nobody can assign herself or others to a target subgroup with a non-negligible probability more than random assignment. 

The randomized subgroup assignment also provides the opportunity for detecting type \rom{2} attackers, whose purpose is to manipulate the aggregated global model, e.g., to insert backdoors. Specifically, model poisoning attackers need to amplify their local parameters dramatically in order to influence the global model as discussed in Section \ref{ss:attackstra}, no matter in ``one-shot" last-round attacks or continuous attacks. If each attacker were to work independently, the magnitude of aggregated models at the subgroups will differ significantly from each other unless each subgroup has exactly the same number of attackers, the chance of which however is very low due to the randomness of the tree-based subgroup assignment even if the number of attackers equals the number of subgroups. The aggregation server can make the chance even lower by randomly changing subgroup configurations over iterations. Considering collaborative attackers, however, such randomness alone is not enough for backdoor attack detection (or model poisoning detection in general). This is because collaborative attackers can intentionally adjust the scales of their local parameters to make the distribution of aggregated subgroup models uniform unless subgroups (at leaf layer) outnumbers attackers. To defeat such collaboration, we need to make the attackers  \emph{oblivious} to each other's subgroup membership information. This means that the attackers shall not know whether or not a given user/attacker belongs to which subgroup. We call such subgroup assignment as \emph{oblivious random grouping (ORG)}.   

With the tree-based ORG, we can detect model poisoning attacks by evaluating aggregated models of the subgroups (i.e., those at the leaf layer of the tree). Specifically, subgroups with more attackers will have much higher magnitude in their aggregated models because of the scale-up of attackers' parameters unless each subgroup has exact number of attackers, the chance of which is extremely low in large-scale systems as shown in Section \ref{ss:OSA}. However, directly revealing aggregated models of subgroups may also lead to some privacy leakage depending on the number of users in the subgroup. To address this issue, SAFELearning only reveals partial higher bits of aggregated subgroup models to the extend that the privacy leakage is no more than what is disclosed by the global model (which is public to all users anyway). Such \emph{partial parameter disclosure (PPD)} allows us to compare aggregated models of subgroups and even to conduct some statistical analysis with model privacy preserved.     

The high-level workflow of our protocol is as following: (a) first, the users and server work together to generate the random tree with our tree generation sub-protocol; at the end of this step, the tree structure and the full orders of secret sharing and pair-wise masking are determined; (b) next, each user shares his secret keys $s_u^{SK}$ and $b_u$ to the users in the same subgroup, obliviously masks his input $x_u$ according to the tree structure (without knowing the subgroup membership information), and sends the encrypted input to the server. After having collected the input and secret shares from the users, (c) the server compares the partial information of the aggregated model from each subgroup to detect abnormal subgroup(s). It computes the global model and returns it to users but without using parameters of abnormal subgroups. ORG is implemented through steps (a) and (b), and PPD is realized in steps (b) and (c). Next two sections elaborate our design of tree-based secure aggregation and poisoning attack detection.


\section{Tree-Based ORG and Secure Aggregation}\label{s:tree}

\subsection{Tree Structure Definition}

We define two independent random tree structures $T_{share}$ and $T_{masking}$, both of which share the same structure of Fig. \ref{Tree_1} but with independent subgroup membership assignments. $T_{share}$ is for secret sharing of secret keys $s_u^{SK}$ and $b_u$, and $T_{masking}$ is for pairwise masking of model parameters. Decoupling the two trees is because of the different security requirements of secret sharing and pairwise masking. For secret sharing, tree $T_{share}$ divides users into small subgroups (rounded rectangle shows in Fig. \ref{Tree_1}). The secret sharing will perform between users inside the same subgroup at the leaf layer of the tree. 

Secure aggregation is performed as in Section \ref{s:Secure} but using pairwise masks within subgroups. Each user applies two types of masking to her parameters - \emph{intra-group masking} and \emph{inter-group masking}. Intra-group masking is to protect individual local models using pairwise masks of users within the same subgroup at the leaf level of the tree. The aggregated model obtained at this level is called \emph{subgroup aggregated model}. Inter-group masking is to protect the subgroup aggregated models, and the pairwise masks are generated between peers at higher non-leaf layer subgroups. As shown in  Fig. \ref{Tree_1}, we logically form groups (represented as $G_{ij}$, $i$ is the layer number and $j$ means this group is the $j$-th child node of its parent) over $k$ subgroups at each non-leaf layer of the tree. This process repeats recursively toward the root. Specifically, for each user $u$ a pairwise masking peer $v$ should comply with following rules:

(1) Ancestors of $u$ and $v$ immediately under their least common ancestor (LCA) shall be within $\kappa$ immediate neighborhood based on the ancestors' total oder at that layer, where $\kappa$ is a system parameter.

(2) The positions of $u$ and $v$ shall be the same by their total orders in their respectively sub-trees, so are the positions of their ancestors below the immediate children of their LCA. 

(3) Only two peers are needed at each layer for each node.


Based on these rules, when $\kappa = 1$ user 2 in $G_{11}$ of $G_{21}$ in Fig. \ref{Tree_1} has the following peers:  users 1 and 3 in the same subgroup; user 2 in $G_{12}$ of $G_{21}$ and user 2 in $G_{1k}$ of $G_{21}$ (because $G_{11}$ is the neighbor (mod k) of $G_{12}$ and $G_{1k}$ at layer 1); user 2 in $G_{11}$ of $G_{22}$ and user 2 in $G_{11}$ of $G_{2k}$ (because $G_{21}$ is the neighbor (mod k) of $G_{22}$ and $G_{2k}$ at layer 2); so on and so forth. This is illustrated by the dotted line in Fig. \ref{Tree_1}. Therefore, our inter-group masking have following properties: first, the number of pairwise masking operations for each user is $2\log\frac{N}{n}$, twice of the tree height; second, those pairwise masks cannot be cancelled until the server aggregates all the subgroups' aggregated models at that layer of the tree. Let $G_{u}^{s}$ be the intra-group masking peers of user $u$ (i.e., pairwise peers in the same subgroups as $u$) and $G_u^{p}$ inter-group masking peers. The masking equation for user $u$ can be written as 
\begin{align}\label{eq:mask}
\begin{split}
y_u = x_u&+PRG(b_u) + \sum\limits_{\forall v\in \{G_{u}^{s},G_{u}^{p}\}:u<v}PRG(s_{u,v})\\&- \sum\limits_{\forall v\in \{G_{u}^{s},G_{u}^{p}\}:u>v}PRG(s_{v,u})\quad (mod\quad R)
\end{split}
\end{align}

The server is able to recover a dropped user $u$'s pairwise masks of any type by secure recovery of his private key $s_u^{SK}$ if at least $t$ honest users in the dropped user's secret sharing subgroup survived.

\subsection{Tree-Based Random Subgroup Generation}\label{ss:RSG}
For secure ORG, the first requirement is the randomness of the subgroup membership assignment. As users are grouped in the total order by their identities, the randomness can be assured if user identities are randomly generated, i.e., they are random and not solely determined by either the user herself or the server. Specifically, user identity $Id_u$ for user $u$ is generated as following\footnote{It works similarly for $T_{share}$ and $T_{mask}$. In our next description, we take $T_{share}$ as example.}: $Id_u = HASH(R_s||c^{PK}_{u}||R_u)$, where $c^{PK}_{u}$(or $s^{PK}_{u}$ if it were the pairwise mask tree $T_{mask}$) is the public key used in Diffie-Hellman key agreement and is used one-time for each iteration. Random numbers $R_s$ and $R_u$ are generated by the server and user $u$ respectively. Because of the randomness of hash function, $Id_u$ is randomly distributed and not predictable to both users and the server.





For random subgroup assignment, however, the order of the disclosure of the tree structure $T$ (generated by the server) and that of $Id_u$ (jointly produced by the server and user $u$) is important. In particular, if $Id_u$ is disclosed before $T$, the server might be able to intentionally group certain users in a subgroup by adjusting the tree structure. On the other hand, if $T$ is disclosed before $Id_u$, malicious users could attempt to group themselves together by manipulating their identities (e.g., via finding special hash results). Both could lead to model privacy disclosure of victim users. To defeat such potential attacks, we design a commitment protocol, as shown in Fig. \ref{f:SubP}, with which $T$ and $Id_u$ are committed before disclosure. Specifically, the server first broadcasts its commitment of a random number $R_s$. Users then send their public key $c^{PK}_u$ (or $s^{PK}_{u}$ if it were pairwise mask tree $T_{mask}$) and commitment of $R_u$ to the server. After collecting enough users, the server will decide the tree structure (degree and layers) $T$ base on the number of users $N$. At this point, $Id_u$ has been uniquely determined but yet disclosed; neither users nor the server can compute or predict it. And the tree structure is determined by the server independently to user identities. The server then broadcasts the commitment of $T$, $R_s$ and a list of commitments to $R_u$'s. On receiving the broadcast message from the server, user discloses $R_u$ to the server, allowing the server to compute the user's $Id_u$ and make subgroup assignment. Users can verify the correctness of the protocol by requesting the server to broadcast $T$ and the list of $c_u^{PK}$(or list of $s_u^{PK}$) and $R_u$ after local parameters have been sent out. 

\begin{figure}[htbp]
\vspace{-3mm}
\centerline{\includegraphics[width=0.35\textwidth]{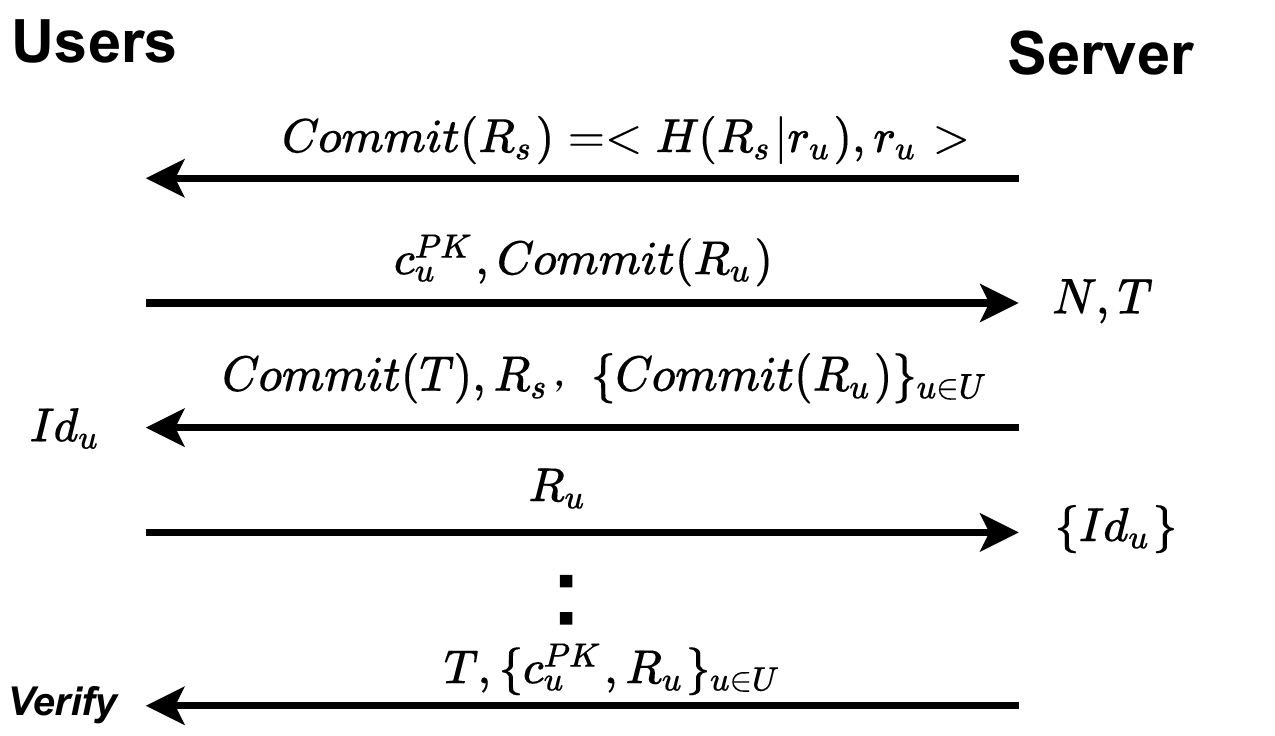}}
\vspace{-1mm}
\caption{Tree structure generation protocol.}
\label{f:SubP}
\vspace{-3mm}
\end{figure}

The protocol has considered misbehaving server and users. However, there is still risks when the server colludes with malicious users. For example, they can pre-compute $Id_u$ by exhaustively testing different $R_s$ and $R_u$ to obtain special identities, e.g., those with leading zero(s). As benign users are less likely to have such special identities due to randomness of the hash function, the malicious users will be assigned to the same subgroup and dominate that group. To defeat such attacks, we generate the final identity of user $u$ as $HASH(\smashoperator[r]{\sum_{\forall v\in G:v\neq u}} Id_v)$ instead of using $Id_u$ directly, where $\sum$ means XOR. By this, the randomness is determined by all but the user himself. With random and trustworthy full order identities of all users and an independently generated tree structure, we can achieve random subgroup assignment as shown in Fig. \ref{Tree_1}. 


\subsection{Oblivious Secure Aggregation}\label{ss:OSA}

As discussed, users shall be oblivious to subgroup membership information for secure ORG. Otherwise, malicious users are able to coordinate and manipulate the distribution of the subgroup aggregated model parameters to bypass anomaly detection. In particular, if attackers (malicious users) outnumbers the subgroups (at the leaf layer), they can coordinate and strategically adjust local parameters to make the distribution of subgroup aggregated models uniform. If the subgroups outnumbers attackers, however, the distribution is doomed imbalance unless attackers give up the attack. Without coordination, the chance that each subgroup contains exact the same number of attackers is very low. For example, if there are $x$ attackers and $x$ subgroups, the probability that each subgroups have exactly one attacker is $\frac{x!}{x^x}$.

However, if users are oblivious to subgroup membership, they are not able to identify their peers and generate pairwise masks to encrypt local parameters. To solve this problem, we let the server (who is against model poisoning attacks for its own benefits) directly send each user the list of public keys of all the users who are her pairwise mask peers. However, directly sending original public keys may allow malicious users to recognize each other and know their group membership information. To address this problem, we design a randomized D-H key exchange protocol wherein the server randomizes each user's public key before sending it out. With this randomized public key, two malicious users are not able to tell whether or not they belongs to the same subgroup unless they are pairwise peers. Specifically, our construction is as follows.   

\textbf{Randomized D-H key exchange.} Assume the server is to coordinate the exchange of public keys between users $u$ and $v$, with their respective public keys $s_u^{PK} = g^{s_u^{SK}}$ and $s_v^{PK} = g^{s_v^{SK}}$. To prevent them from recognizing each other's public key, the server ``randomizes" their public keys before sending out. Specifically, it first produces a random number $r_{u,v}$, and then sends a randomized public key $s_u^{(PK,v)} = (s_u^{PK})^{r_{u,v}}$ to user $v$ and $s_v^{(PK,u)} = (s_v^{PK})^{r_{u,v}}$ to user $u$. After key exchange like D-H, the shared key $s_{u,v}$ will become following form:
\begin{equation*}
    s_{u,v}=(s_u^{(PK,v)})^{s_v^{SK}} = (s_v^{(PK,u)})^{s_u^{SK}} = g^{s_u^{SK}*s_v^{SK}*r_{u,v}}
\end{equation*}

Please note that pairwise peers are still able to verify that they are in the same subgroup by comparing the shared key they computed. However, the purpose of our randomized D-H key exchange is to thwart users who are in the same subgroup but not peers from knowing the fact that they are in the same subgroup. This is achieved because of the unique random number $r_{u,v}$ for each pair of peers. Without the randomization, however, two attackers will easily know that they in the same group if they receive common public keys of benign user(s).

%
\begin{figure}[htbp]
\centering
\begin{subfigure}{.5\linewidth}
  \centering
  \includegraphics[width=.7\linewidth]{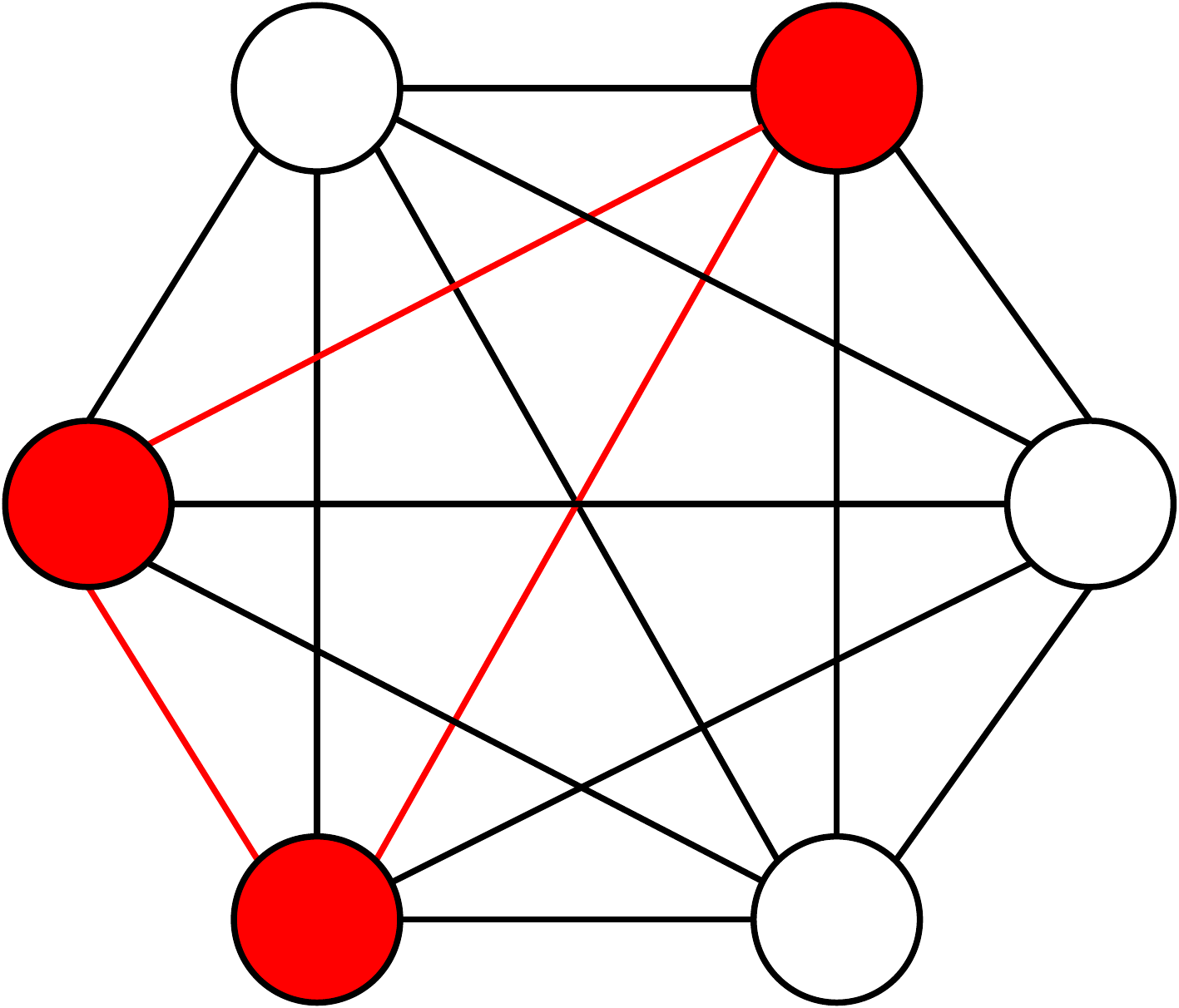}
  \caption{Complete graph topology}
  \label{fig:sub1}
\end{subfigure}%
\begin{subfigure}{.5\linewidth}
  \centering
  \includegraphics[width=.7\linewidth]{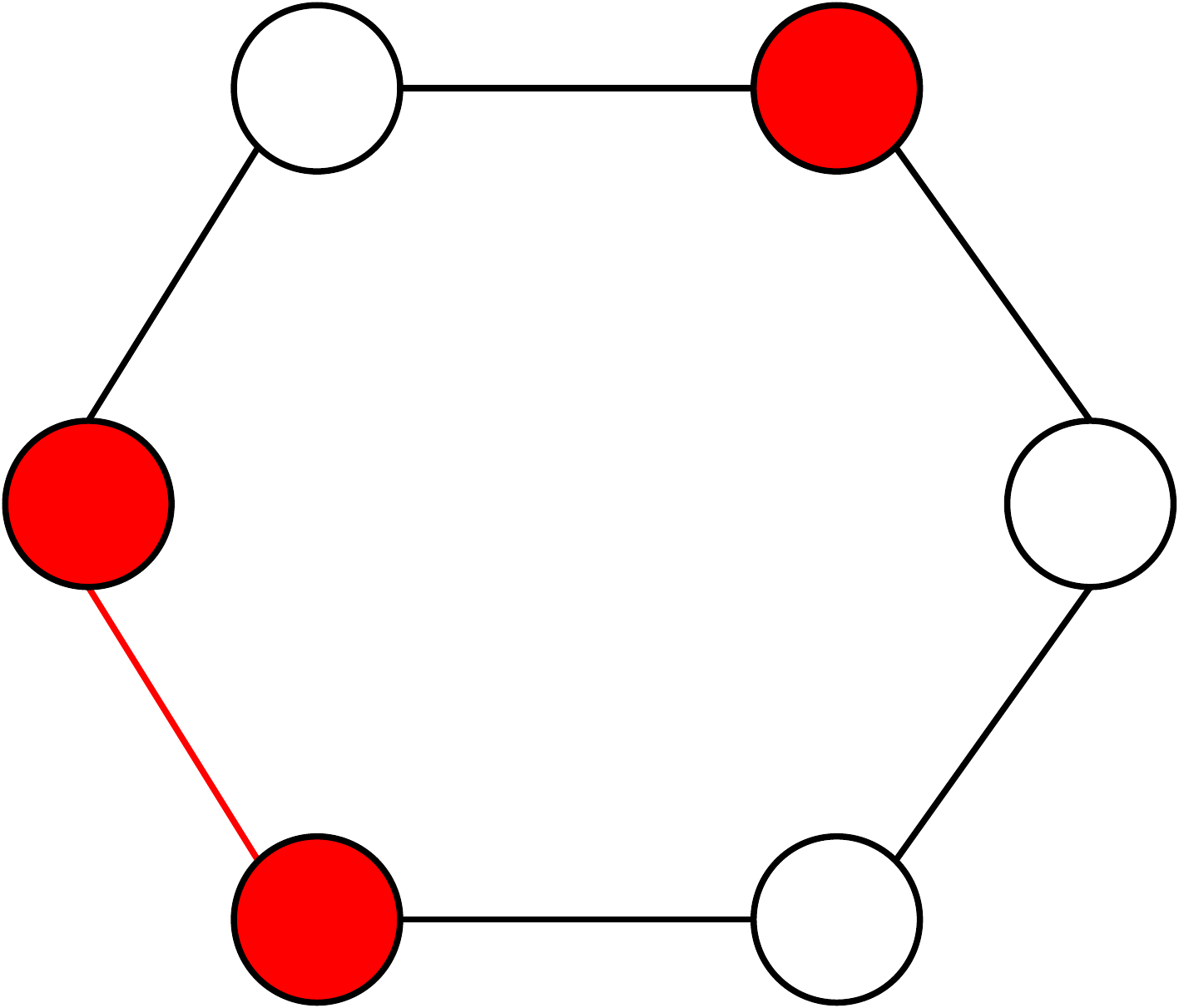}
  \caption{Circular topology}
  \label{fig:sub2}
\end{subfigure}
\vspace{-2mm}
\caption{Graph representation of users inside a subgroup. Red nodes represent attackers. Edges denote pairwise masking relationship.}
\label{fig:graph}
\vspace{-3mm}
\end{figure}
\textbf{Circular topology for pairwise masking relationship} Consider users inside a subgroup as vertices of an undirected graph, the edges of which represent pairwise masking peer relationship as shown Fig. \ref{fig:graph}. If users choose all the other users in the subgroup as peers, the graph is complete. In this case attackers will know each other's membership information because each pair of peers share the same seed $s_{u,v}$. To avoid such situation, we use a circular graph as shown in Fig. \ref{fig:graph} (b), in which each user $u$ only pairs with $2\kappa$ users - her $\kappa$ immediate previous and next neighbors based on the total order $Id_u$. For successful attack, attackers need to know the membership information of all of them. This is possible only when the attackers inside the same subgroup form a chain. Assume there are a total of $x$ attackers in a subgroup, the chance is at most $\frac{k\kappa}{n^{k-1}}$, where $k$ is the number of attackers in same subgroup and $n$ is the size of the subgroup.

\section{Backdoor Attack Detection}\label{s:malicious}

In machine learning, one approach to detect model poisoning or Byzantine attacks is by statistically analyzing the magnitude of the model parameter vectors \cite{Shen2016A,blanchard2017machine}. As discussed in Section \ref{ss:attackstra}, attackers need to amplify model parameters by a factor of $\gamma = \frac{N}{\eta |U_a|}$ on average to successfully launch model poisoning attacks. With random subgroup assignment (Section \ref{ss:RSG}) and oblivious secure aggregation (Section \ref{ss:OSA}), users are partitioned into random subgroups at each iteration and perform pairwise masking without knowing others' group information. Moreover, the random tree structure (including the subgroup configuration) is not revealed before local parameters of all users have been uploaded (Section \ref{ss:RSG}). Attackers have to work independently and it is difficult for any of them to control or even predict the distribution of subgroup aggregated models. As a result, the subgroup aggregated parameters will be very sensitive to even slight differences in the number of attackers presented.  This provides the opportunity to detect backdoor attacks by comparing the subgroup aggregated models. 



In our secure aggregation scheme, local parameters are masked by both intra-group masks and inter-group masks. After aggregation at leaf-layer subgroups, the subgroup aggregated models are only protected by inter-group masks. By changing inter-group masks from full-bit masks to partial masks, i.e., by revealing few higher bits, we can observe partial information of subgroup aggregated models and perform anomaly detection. Because attackers need to significant scale parameters, non-zero bits may present in higher bits for attackers' subgroups. However, disclosure of these higher bits is acceptable only if it does not lead to extra privacy leakage as compared to what the global model discloses. To this end we first derive the number of bits that can be disclosed by the partial parameter disclosure (PPD) mechanism and then present our backdoor detection algorithm.

\subsection{Privacy-Preserving PPD}
\textbf{Bit format of model parameters and masking:}  Assume each element of $x_u$ is in range $[0,R_U]$ and can be stored in fix-point format\footnote{Previous research \cite{gupta2015deep} discovered that using $14$ bits fix point data only have $0.05\%$ accuracy decreasing compare to 32 bits float point data in MNIST and CIFAR-10 dataset.}, and the high bit segment $H(x_u)$ in range $[R_H,R_U]$ can be disclosed. The extracted high bits of the aggregated model of subgroup $G_{1i}$  is $\{\sum_{x_u\in G_{1i}}H(x_u)\}_{i\in N,i\le \frac{N}{n}}$. $G_{1i}$ is $i$-th subgroup at the leaf layer (i.e., layer $1$ as shown in Fig. \ref{Tree_1}). To support higher bits disclosure, we can adjust the pairwise masking equation for user $u$ as follows:
\begin{align}
&y_u = x_u+PRG(b_u)+ 
    \smashoperator[r]{\sum_{\forall v\in G_u^s:u<v}}PRG(s_{u,v})-\nonumber\\ 
    &\smashoperator[r]{\sum_{\forall v\in G_u^s:u>v}}PRG(s_{v,u})+
    \smashoperator[r]{\sum\limits_{\forall v\in G^p_u:u<v}}(\Lambda_{R_H}\land PRG(s_{u,v}))-\nonumber\\
    &\smashoperator[r]{\sum\limits_{\forall v\in G^p_u:u>v}}(\Lambda_{R_H}\land PRG(s_{v,u}))\quad (mod\quad R)\label{eq:mask}
\end{align}
where $\Lambda_{R_H}$ is a vector of binary masks of length $m$ and $G^p$ is the set of inter-group peers. The bits at positions in $[log_2R_H,log_2R_U]$ of $\Lambda_{R_H}$ are set as $0$'s, and the rest bits are $1$. In this way, higher bits starting from $log_2R_H$ are not masked. 

To estimate the privacy disclosure by revealing the higher bits of the subgroup aggregated models, we analyze the posterior probability $P(|X_i-y|<\varepsilon|Y)$, where $X_i\in R^m$ is the local parameter vector, $y$ is the aggregation result and $Y$ is its range. We compare the probability bound of aggregating all $N$ users but revealing all the bits (i.e., the disclosure of global model) with the bound of aggregating $n$ users but revealing only bits in range $[R_H,R_U]$ (i.e., disclosure of higher $log_2R_U - log_2R_H$ bits of subgroup aggregated models). We prove that when $R_H=2(1-\sqrt{\frac{n-1}{n}})\varepsilon$, the expected bounds will be the same in the two situations.

\begin{theorem}\label{TH:1}
Let $\{X_i\}_{i\in [1,N]}$ be $N$ samples from some arbitrary distribution with mean vector  $\overline{\mathbb{E}}(X)$ and variance $\overline{\sigma}_N^2$. The probability bound is:
\begin{align}\label{eq:Chev}
    P(||X_i-y||\geq \varepsilon|y=\overline{\mathbb{E}}(X))\leq \frac{\overline{\sigma}_N^2}{\varepsilon^2},\varepsilon>0
\end{align}
\end{theorem}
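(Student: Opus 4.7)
The statement is a vector-valued form of Chebyshev's inequality, so the plan is to derive it as a direct corollary of Markov's inequality applied to the squared deviation. First I would fix the interpretation of the symbols: treat $\overline{\mathbb{E}}(X)$ as the mean vector of the distribution from which the $X_i$ are drawn, and read $\overline{\sigma}_N^2$ as the expected squared Euclidean deviation $\mathbb{E}\bigl[\|X_i-\overline{\mathbb{E}}(X)\|^2\bigr]$, which is the natural scalar analogue of variance for a vector-valued random variable (equivalently, the trace of the covariance matrix). This identification is what lets the right-hand side of the claimed bound be a single scalar $\overline{\sigma}_N^2/\varepsilon^2$.

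Next I would reduce the event to a non-negative scalar random variable. Because $\varepsilon>0$, we have the equivalence
\begin{equation*}
\{\|X_i-y\|\geq \varepsilon\}=\{\|X_i-y\|^2\geq \varepsilon^2\},
\end{equation*}
and under the conditioning $y=\overline{\mathbb{E}}(X)$ the quantity $Z:=\|X_i-\overline{\mathbb{E}}(X)\|^2$ is a non-negative random variable. Then I would invoke Markov's inequality,
\begin{equation*}
P(Z\geq a)\leq \frac{\mathbb{E}[Z]}{a},\qquad a>0,
\end{equation*}
with $a=\varepsilon^2$. Substituting the definition of $\overline{\sigma}_N^2$ gives $\mathbb{E}[Z]=\overline{\sigma}_N^2$, and hence
\begin{equation*}
P(\|X_i-\overline{\mathbb{E}}(X)\|\geq \varepsilon)\leq \frac{\overline{\sigma}_N^2}{\varepsilon^2},
\end{equation*}
which is exactly the stated bound.

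The only real obstacle is a notational one rather than a mathematical one: the paper writes $\overline{\sigma}_N^2$ without defining it, and a careful reader may wonder whether it means a coordinate-wise variance vector, the trace of the covariance, or the variance of the norm itself. I would therefore include a one-line remark fixing the convention $\overline{\sigma}_N^2=\mathbb{E}\bigl[\|X_i-\overline{\mathbb{E}}(X)\|^2\bigr]$ so that the application of Markov's inequality is unambiguous. Once that is pinned down, the proof is two lines and no further machinery (independence, distributional assumptions, or concentration inequalities beyond Markov) is required, which also clarifies why the bound holds for \emph{any} distribution as the theorem claims.
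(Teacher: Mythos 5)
Your proof is correct and takes essentially the same route as the paper, which simply invokes Chebyshev's inequality directly; your derivation via Markov's inequality applied to $\|X_i-\overline{\mathbb{E}}(X)\|^2$ is exactly the standard proof of that inequality, spelled out for the vector-valued case. Your remark pinning down $\overline{\sigma}_N^2$ as $\mathbb{E}\bigl[\|X_i-\overline{\mathbb{E}}(X)\|^2\bigr]$ is a helpful clarification that the paper omits.
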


\begin{proof}
This can be proved by Chebyshev Inequality directly.
\end{proof}

{Theorem \ref{TH:1}} shows that if the server  aggregates at root node, which is equivalent to $Y=y=\overline{\mathbb{E}}(X)$, the difference between local vector $X_i$ and aggregated result $y$ is limited by the variance of $X$.

\begin{theorem}\label{TH:2}
Let $\{X'_i\}_{i\in [1,n]}$ be $n$ samples randomly selected from $\{X_i\}_{i\in [1,N]}$ in {Theorem \ref{TH:1}} and $Y=y\in [\overline{\mathbb{E}}(X')-R_H/2,\overline{ \mathbb{E}}(X')+R_H/2]$, where $\overline{ \mathbb{E}}(X')$ is the mean vector of $\{X'_i\}_{i\in [1,n]}$. If $R_H=2(1-\sqrt{\frac{n-1}{n}})\varepsilon$, the expected bound of $P(|X_i-y|<\sigma|Y)$ is $\frac{\overline{\sigma}_N^2}{\varepsilon^2}$.
\end{theorem}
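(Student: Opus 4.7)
The plan is to bound $P(|X_i-y|\ge\varepsilon\mid Y)$ via the triangle inequality and then apply Chebyshev (Theorem \ref{TH:1}) to the deviation of $X_i$ from the subgroup sample mean $\overline{\mathbb{E}}(X')$, matching the result against the target bound $\overline{\sigma}_N^2/\varepsilon^2$ to extract the required $R_H$. I will treat the theorem as stated with the natural reading that the tail event of interest is $|X_i-y|\ge\varepsilon$ (consistent with Theorem \ref{TH:1}) and that $Y$ is the partial‑disclosure event that pins $y$ to an interval of width $R_H$ centered at the subgroup sample mean.

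First, from the hypothesis on $Y$, one has $|y-\overline{\mathbb{E}}(X')|\le R_H/2$. By the triangle inequality,
\begin{equation*}
|X_i-y|\ \le\ |X_i-\overline{\mathbb{E}}(X')|+|y-\overline{\mathbb{E}}(X')|,
\end{equation*}
so the event $|X_i-y|\ge\varepsilon$ forces $|X_i-\overline{\mathbb{E}}(X')|\ge\varepsilon-R_H/2$. Hence
\begin{equation*}
P(|X_i-y|\ge\varepsilon\mid Y)\ \le\ P\!\left(|X_i-\overline{\mathbb{E}}(X')|\ge\varepsilon-R_H/2\right).
\end{equation*}

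Second, I would compute the variance of $X_i-\overline{\mathbb{E}}(X')$ when $X_i$ is itself one of the $n$ samples comprising $\overline{\mathbb{E}}(X')=\tfrac1n\sum_{j=1}^n X'_j$. Writing $X_i-\overline{\mathbb{E}}(X')=\tfrac{n-1}{n}X_i-\tfrac1n\sum_{j\ne i}X'_j$ and using independence of the $X'_j$'s with common variance $\overline{\sigma}_N^2$, the two terms contribute $\tfrac{(n-1)^2}{n^2}\overline{\sigma}_N^2$ and $\tfrac{n-1}{n^2}\overline{\sigma}_N^2$, giving $\tfrac{n-1}{n}\overline{\sigma}_N^2$ in total. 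Applying Chebyshev (the same inequality that powers Theorem \ref{TH:1}) then yields
\begin{equation*}
P\!\left(|X_i-\overline{\mathbb{E}}(X')|\ge\varepsilon-R_H/2\right)\ \le\ \frac{(n-1)\,\overline{\sigma}_N^2/n}{(\varepsilon-R_H/2)^2}.
\end{equation*}

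Finally, I set the right‑hand side equal to the target bound $\overline{\sigma}_N^2/\varepsilon^2$ and solve for $R_H$. This gives $(\varepsilon-R_H/2)^2/\varepsilon^2=(n-1)/n$, i.e.\ $1-R_H/(2\varepsilon)=\sqrt{(n-1)/n}$, whence $R_H=2\bigl(1-\sqrt{(n-1)/n}\bigr)\varepsilon$, exactly the value in the statement. The one delicate point—where I expect the main friction—is the precise meaning of ``expected bound'' in the theorem: the averaging is implicit in the conditioning, since the position of $y$ inside the width‑$R_H$ window around $\overline{\mathbb{E}}(X')$ is absorbed by the worst‑case $|y-\overline{\mathbb{E}}(X')|\le R_H/2$ used above. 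If the authors intend an actual expectation over the subgroup selection (affecting $\overline{\mathbb{E}}(X')$ as a hypergeometric sample mean) the variance computation picks up a finite‑population correction $\tfrac{N-n}{N-1}$, which in the regime $N\gg n$ does not change the leading expression and so the closed‑form for $R_H$ is unchanged up to this factor; I would note this subtlety and either absorb it into $\overline{\sigma}_N^2$ or assume sampling with replacement as the cleanest way to match the stated equality.
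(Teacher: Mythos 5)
Your proposal is correct and reaches the same formula for $R_H$, but the route through the $\tfrac{n-1}{n}$ factor is genuinely different from the paper's. The paper first applies Chebyshev to $|X_i-\overline{\mathbb{E}}(X')|$ using the subgroup's \emph{sample} variance $\overline{\sigma}_n^2$, does the same worst-case widening by $R_H/2$ that you do, and then --- because $\overline{\sigma}_n^2$ is itself random under the random subgroup selection --- invokes the standard fact that $\mathbb{E}[\overline{\sigma}_n^2]=\tfrac{n-1}{n}\overline{\sigma}_N^2$ and takes the expectation of the (random) bound; this is precisely what the phrase ``expected bound'' in the statement refers to. You instead compute the true variance of the deviation $X_i-\overline{\mathbb{E}}(X')=\tfrac{n-1}{n}X_i-\tfrac1n\sum_{j\ne i}X'_j$ directly and get the same factor $\tfrac{n-1}{n}\overline{\sigma}_N^2$, yielding a deterministic bound with no expectation left to take. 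Your version is arguably cleaner and more rigorous: applying Chebyshev with an empirical sample variance in place of the true variance of the sampled point (as the paper does) is not literally licensed by the inequality, whereas your decomposition is exact; the cost is that your statement no longer matches the theorem's ``expected bound'' phrasing, a mismatch you correctly flag. Your closing remark about the finite-population correction for sampling without replacement is also a real subtlety the paper glosses over (it implicitly treats the $X'_j$ as i.i.d.), and your resolution --- absorb it or assume replacement, since it vanishes for $N\gg n$ --- is reasonable. One typo-level note: the paper's intermediate display writes $\frac{\overline{\sigma}_n^2}{\varepsilon}$ with condition $\varepsilon^2>\frac{R_H}{2}$, which is clearly a transposition of the exponent; your version has it right.
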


\begin{proof} Please see Appendix \ref{app:theo2} for detailed proof. 
\end{proof}

{Theorem \ref{TH:2}} indicates that information leakage can be reduced by increasing the subgroup size $n$ or decreasing the number of  bits revealed (i.e., increasing $R_H$). Therefore, the the server can adjust parameters $n$ and $R_H$ to minimize privacy discourse risk. In particular, when $R_H=2(1-\sqrt{\frac{n-1}{n}})\varepsilon$, the privacy leaked by disclosing the higher $log_2R_U - log_2R_H$ bits is the same as what is disclosed by the global model.

\subsection{Backdoor Attack Detection}\label{ss:BAD}

With the partially revealed subgroup aggregated parameters, we design a new backdoor attack detection algorithm (as shown in Algorithm \ref{Alg:1}) to detect and withstand backdoor attacks. At each iteration $t$, we first compute the Euclidean distance $d_i$ between the global model $H(X_t)$ and the partially revealed aggregated model $H(\overline{x_i})$ for each subgroup $i$. Then the server computes the standard deviation $Std(D)_t$ of $\{d_i\}$ for iteration $t$.  An adaptive threshold $\xi_t = \frac{\rho}{T}\sum_{j=1}^{T}Std(D)_{t-j}$ is also computed, which is the product of the average of standard deviations in last $T$ iterations and a threshold expansion factor $\rho$. If the standard deviation $Std(D)_t$ is above the threshold $\xi_t$, the server replaces the subgroup with the largest distance to the global model $H(X_t)$ with $H(X_t)$ (i.e., the partially revealed higher bits), and recomputes the standard deviation of the updated $\{d_i\}$. This repeats until the standard deviation of $\{d_i\}$ is below the threshold $\xi_t$.

To better understand how our method works, we differentiate three attack scenarios: \textsl{all benign subgroups}, \textsl{mixed subgroups} and \textsl{all malicious subgroups}. When no subgroup is malicious, our method can avoid false positive detection because our adaptive threshold is proportional to the average of standard deviation of $\{D\}$ in the last few rounds, and the threshold expansion factor $\rho$ will relax the threshold to accept normal deviations introduced by benign users. When benign subgroups and malicious subgroups are both presented, which usually happens when the number of attackers is less than or close to the number of subgroups, our method will replace the aggregated models of malicious subgroups, which usually have larger deviations from the global model as discussed in Section \ref{ss:attackmodel}, with the current global model. When all subgroups are malicious, which usually happens when the attackers outnumber the subgroups, the subgroups with more attackers will be replaced with the global model first. The remaining malicious subgroups will have large deviations from the global model and result in a large standard deviation $Std(D)_t$ and will be subsequently replaced, which means this iteration of training is abandoned.

 \begin{figure}[htbp]
 \vspace{-1.5mm}
\centerline{\includegraphics[width=0.8\linewidth]{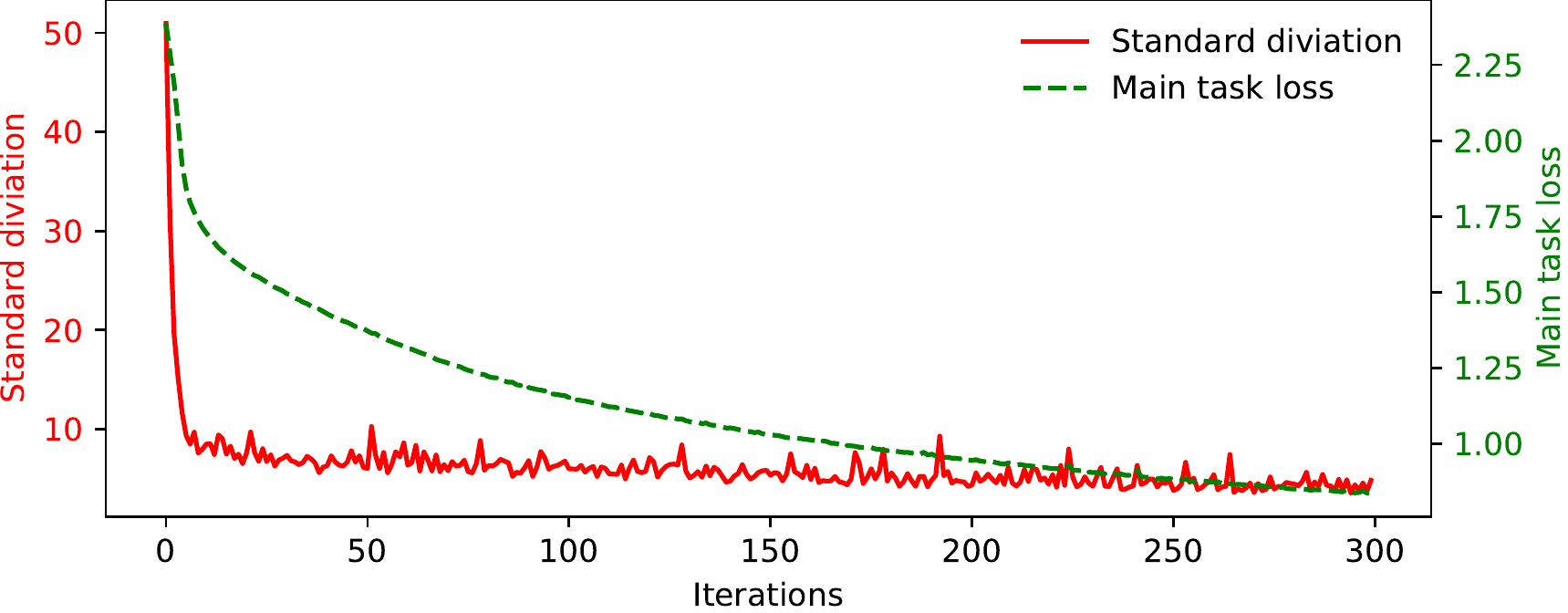}}
\vspace{-3mm}
\caption{SAFELearing when no attacker exists  - main task loss \& standard deviation vs. iterations}\label{fig:STD}
\vspace{-3.5mm}
\end{figure}

One situation is when the attacker continuously manipulates the standard deviation for multiple rounds to neutralize the adaptive threshold by injecting continuous attacks to gradually increase the standard deviation of $\{d_i\}$. However, as discussed in Section \ref{ss:attackstra}, there exist minimal scale factors even for continuous attacks to be effective. Such scaled factors are proportional to the inverse of the learning rate, a hyperparameter controlled by the server. Moreover, without malicious attackers, the standard deviation of $\{d_i\}$ tends to decrease over iterations (even for non-i.i.d data), following the similar pattern of the global model's loss function value. This is because when the global model is converging, the change rate of the global model and contributions from benign users' local models will decrease. This can be validated by our preliminary experiments as shown in Fig. \ref{fig:STD}. Therefore, when continuous attackers first join the training, we can expect a large disparity between the standard deviation of current iteration and previous iterations, which provides opportunities for attack detection. Notes that if the attack starts from the very beginning of training, the main task will be jeopardized. Detail evaluation of such attackers will be present in Section \ref{s:Eval}.


\begin{algorithm}
    \SetAlgoLined
    \KwData{$\overline{X}_L$ is the set of partially revealed aggregated parameter vectors of the subgroup set $G_L$ at layer $L$; $X_t$ is global model of round $t$;
    $Eucl(,)$ computes the Euclidean distance; $Std(\cdot)$ computes the standard deviation; $\xi_t$ is adaptive threshold of round $t$; $\rho$ is the threshold expansion factor; $M=\varnothing$; $D=\varnothing$}
    \KwResult{Attacker-inclusion set $M$}
    
     \For{each vector $\overline{x_i} \in \overline{X}_L$ }{
     \textbf{Compute the distance to global model}
     $d_i=Eucl(H(\overline{x_i}) ,H(X_t))$\\
     $D= D\cup \{d_i\}$\\
     }
    $\xi_t=\frac{\rho}{T}\sum_{j=1}^{T}Std(D)_{t-j}$\\
    \While{$Std(D)>\xi_t$}
    {
        \textbf{Replace the farthest subgroup with global model}
        $H(\overline{x_i}) = H(X_t)$\\
        $D = (D\backslash\{d_i|d_i=Max(D)\})\cup\{0\}$\\
        $M = M\cup \{i\}$\\
    }
    
\caption{Suspicious subgroup detection}\label{Alg:1}
\end{algorithm}

\subsection{Numerical Complexity}

We present overall cost of our protocol in Table \ref{tab:cost}. Detailed complexity analysis and simulation results are elaborated in Appendix \ref{apdx:performance_theo} and \ref{apdx:performace}, respectively.

\begin{table}[!htb]
    \caption{Overall computational and communication costs}\label{tab:cost}
    \vspace{-2mm}
    \centering
    \begin{tabular}{ |m{2.4cm}||m{2.4cm}|m{2.4cm}| } 
        \hline
        \multicolumn{3}{|c|}{Tree Based Secure Aggregation Protocol}\\
        \hline
        & User & Server\\
        \hline
        computation & $O(N+n^2+m)$ & $O(mN+nN)$\\
        communication &$O(N+m)$&$O(N^2+mN)$\\
        \hline
    \end{tabular}
    
    \vspace{-3mm}    
\end{table}

\section{SECURITY ANALYSIS}

We show the security of our protocol with the following theorems, where $t\ge {|U_a|}$ denotes the threshold number of attackers, $C\subseteq U \cup \{S\}$ an arbitrary subset of parties, $x_*$ the input of a party or a set of parties $*$, $U$ the set of all users, $S$ the server and $k$ the security parameter. 

\begin{theorem}\label{TH:3}(Local Model Privacy under Type \rom{1} Attackers)
There exists a PPT simulator $\rm{SIM}$ such that for all $t,U,x_U$ and $C\subseteq U \cup \{S\}$, where $|C\backslash  \{S\}| \leq t$, the output of $\rm{SIM}$ is computationally indistinguishable from the output of  $\rm{Real}_C^{U,t,k}(x_U,U)$\:\newline  $$\rm{Real}_C^{U,t,k}(x_U,U) \approx \rm{SIM}_C^{U,t,k}(x_C,z,U)$$
where 
$$z=
\begin{cases}
\sum_{u\in U\backslash C}x_u & \text{if |U| $\geq $ t} \\
\bot & \text{o.w.}
\end{cases}
$$
\end{theorem}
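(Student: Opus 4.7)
The plan is to construct a polynomial-time simulator $\mathrm{SIM}$ and prove indistinguishability through a sequence of hybrid experiments, following the template of Bonawitz et al.\ but extended to the tree-based ORG structure (intra- and inter-group masking, commitment-based random tree generation, and randomized D-H key exchange). At a high level, $\mathrm{SIM}$ plays the honest parties toward the corrupted set $C$: when $S \in C$, it uses the aggregate $z$ to synthesize a consistent view of the masked inputs from honest users, and it replaces every primitive whose security is computational (commitments, DH, PRG) or threshold-based (Shamir sharing) with a trivially simulable surrogate.

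Concretely, I would proceed through the following hybrids, arguing indistinguishability of consecutive hybrids by standard reductions. $H_0$ is the real execution $\mathrm{Real}_C^{U,t,k}(x_U,U)$. In $H_1$, replace every commitment sent by an honest party during the tree-generation subprotocol of Section \ref{ss:RSG} (the server's commitment to $R_s$, the commitments to $R_u$ for $u \in U \setminus C$, and the commitment to the tree $T$) by a commitment to $0$; indistinguishable by the hiding property of the commitment scheme. In $H_2$, for every honest user $u \in U \setminus C$ and every corrupted $v$ that is not a designated pairwise-masking peer of $u$, replace the randomized public key $s_u^{(PK,v)}=(s_u^{PK})^{r_{u,v}}$ delivered to $v$ by a uniformly random group element; this is justified by DDH applied to the fresh exponent $r_{u,v}$, and it is precisely the step that enforces subgroup-membership obliviousness for non-peer adversaries. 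In $H_3$, for every pair of honest peers $(u,v)$ with $\{u,v\} \subseteq U \setminus C$, replace the shared seed $s_{u,v}$ with a uniformly random seed; indistinguishable by DDH. In $H_4$, substitute every PRG output derived from a now-random seed (the pairwise $PRG(s_{u,v})$ outputs from $H_3$ and $PRG(b_u)$ for honest $u$) with a uniformly random vector in $\mathbb{Z}_R^m$; indistinguishable by PRG security. In $H_5$, replace every Shamir share of an honest user's secrets $s_u^{SK}$ and $b_u$ that is delivered to a corrupted user by a share of $0$; since $|C \setminus \{S\}| \le t$ and the sharing is $t$-private, the adversary's joint view is statistically indistinguishable. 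Finally, in $H_6$ the simulator samples each honest upload $y_u$ uniformly subject to the single linear constraint
\begin{equation*}
\sum_{u \in U \setminus C} y_u \equiv z + \Delta \pmod{R},
\end{equation*}
where $\Delta$ collects the surviving cross-masks between honest and corrupted peers that the simulator can compute from the shared seeds known to $C$. Hybrid $H_6$ depends on $x_{U\setminus C}$ only through $z$, and equals the output of $\mathrm{SIM}_C^{U,t,k}(x_C,z,U)$; its distribution matches $H_5$ because after the replacements in $H_4$ the joint distribution of $\{y_u\}_{u\in U\setminus C}$ conditioned on their inputs summing to $z$ is already uniform subject to the same linear constraint.

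The hardest part will be the last two hybrids. First, handling dropouts: for any honest user $v$ that drops between the mask and unmask rounds, the surviving users must open shares of $s_v^{SK}$ so the server can recompute the pairwise masks involving $v$; $\mathrm{SIM}$ must therefore commit to consistent values of $s_v^{SK}$ across hybrids, and we must invoke the assumption that at least $t$ honest users survive in every secret-sharing subgroup to guarantee that the adversary never holds $\ge t$ shares of any still-masked secret, keeping $H_5$ sound. Second, producing $\{y_u\}$ consistent with $z$ in $H_6$ is nontrivial under the hierarchical tree because inter-group masks between honest subgroups only cancel once an entire layer has been aggregated; the simulator must track which masks are still live at each intermediate disclosure point and program the $y_u$ accordingly, which amounts to solving a system of linear constraints induced by the $T_{masking}$ tree. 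A secondary subtlety is the interaction with PPD: if partial subgroup aggregates are also visible to $C$, one must either extend $\mathrm{SIM}$'s input to include those partial sums and appeal to Theorem \ref{TH:2} to bound the marginal leakage by that of the global model, or treat PPD as an external leakage channel separate from the aggregation functionality captured by this theorem.
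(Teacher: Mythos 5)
Your proposal follows essentially the same route as the paper's proof in Appendix~\ref{apdx:type1}: a standard hybrid argument that successively replaces the randomized-D-H pairwise seeds by uniform values (DDH), the derived masks by uniform vectors (PRG security), the Shamir shares held by corrupted parties by shares of $0$ ($t$-privacy), and finally the honest inputs by uniform $w_u$ subject to $\sum_{u\in U\setminus C} w_u = z$. The only substantive differences are minor: you fold in the tree-generation commitments (which the paper defers to Theorem~\ref{TH:4}), you omit the paper's IND-CPA step for the encrypted shares exchanged between honest users, and you explicitly flag the interaction with PPD --- the fact that the unmasked high bits of subgroup aggregates are a leakage channel not captured by $z$ alone --- which the paper's own $Hybrid_8$ passes over rather quickly and which is worth treating as carefully as you propose.
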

\begin{proof}
Detailed proof is presented in Appendix \ref{apdx:type1}.
\end{proof}

\begin{theorem}\label{TH:4}(Random Tree Structure Secrecy)
There exists a PPT simulator $\rm{SIM}$ such that for all $t,U,x_U$ and $C\subseteq U \cup \{S\}$, where $|C\backslash  \{S\}| \leq t$, the output of $\rm{SIM}$ is indistinguishable from the output of real protocol:   $$\rm{Real}_C^{U,t,k}(R_u,R_S,C_u^{PK}) \approx \rm{SIM}_C^{U,t,k}(R_u,R_S,C_u^{PK})$$
\end{theorem}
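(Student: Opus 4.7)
The plan is to construct a PPT simulator $\mathrm{SIM}$ whose output on inputs $(R_u, R_S, C_u^{PK})$ is computationally indistinguishable from the transcript that the corrupted subset $C$ observes during a real execution of the tree generation subprotocol of Fig.~\ref{f:SubP}. Since the subprotocol consists only of (i) a commitment to $R_S$, (ii) transmission of each $c_u^{PK}$ together with a commitment to $R_u$, (iii) the server's broadcast of the tree topology $T$ and the list of user commitments, and (iv) openings of the $R_u$'s followed by the deterministic identity derivation $Id_u = \mathrm{HASH}(\sum_{v\neq u} Id_v)$, the simulator $\mathrm{SIM}$ can straight-line simulate every honest party (users in $U\setminus C$ and the server when $S\notin C$) by replaying these steps using the provided randomness. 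Messages originating from corrupted parties are forwarded by the environment exactly as in the real world.

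The indistinguishability claim will be established by a short hybrid argument. In $\mathcal{H}_0$ I take the real execution. In $\mathcal{H}_1$ I replace the commitments produced by honest parties with commitments to independent random strings; the statistical distance between $\mathcal{H}_0$ and $\mathcal{H}_1$ is negligible by the hiding property of the commitment scheme used in Section~\ref{ss:RSG}. In $\mathcal{H}_2$ I switch to the distribution generated by $\mathrm{SIM}$, which coincides with $\mathcal{H}_1$ in the opening phase because $\mathrm{SIM}$ uses precisely the randomness $(R_u,R_S,C_u^{PK})$ appearing in the theorem statement. Composing the two negligible gaps yields the conclusion.

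The step that deserves the most care, and the true obstacle, is ruling out that the adversary $C$ can use the protocol as an oracle to \emph{bias} either the tree topology $T$ or its own subgroup placement beyond what is available in the ideal world. Three observations must be assembled. First, a corrupted server is bound by its initial commitment to $R_S$, so by the binding property it cannot choose $R_S$ adaptively after seeing the $R_u$'s, and therefore $T$ (which is derived from $N$ alone) is independent of the identities that will eventually be revealed. Second, corrupted users must commit to $R_u$ before $T$ is announced and before any other $R_v$ is opened, so they cannot grind over candidate nonces to land in a specific subgroup; any successful grinding strategy would contradict either binding of the commitment or collision resistance of $\mathrm{HASH}$. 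Third, the ``doubly hashed'' rule $Id_u = \mathrm{HASH}(\sum_{v\neq u} Id_v)$ forces every honest user's contribution to enter the final identity of every corrupted user, so modeling $\mathrm{HASH}$ as a random oracle (or invoking its one-wayness and collision resistance) gives that the joint distribution of final identities is computationally close to uniform and independent of the corrupted parties' choices.

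Putting these ingredients together, $\mathrm{SIM}$'s output on $(R_u,R_S,C_u^{PK})$ is indistinguishable from the real view of $C$, which establishes Theorem~\ref{TH:4}. The technical heart of the write-up will be the reduction in the third observation above, since it is what rules out subtle grinding and adaptive-opening strategies; the first two reductions are near-immediate from the standard security properties of the underlying commitment scheme.
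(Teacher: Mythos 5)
Your proposal is correct and follows essentially the same route as the paper: a short hybrid argument over the tree-generation subprotocol that relies on the hiding/binding properties of the commitments and on the (random-oracle-style) properties of $\mathrm{HASH}$ applied to $Id_u = \mathrm{HASH}(R_s||c^{PK}_u||R_u)$ and the final XOR-then-hash rule, with at least one honest contributor guaranteeing uniformity. The only real difference is one of emphasis: you devote the technical heart to the binding/anti-grinding reductions, whereas the paper places the hash-uniformity step inside its hybrids and disposes of the adaptive-commitment issue in a single closing remark.
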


\begin{proof}
We prove this by a standard hybrid argument. The detailed proof is in Appendix \ref{apdx:type1}. 
\end{proof}

\begin{theorem}\label{TH:5} (Indistinguishability of Type \rom{2} attackers in the same subgroup)
For any type \rom{2} attacker $\mathcal{A}$ in a certain subgroup, if there is another type \rom{2} attacker $\mathcal{B}$ in the same subgroup but is not peered with $\mathcal{A}$ for pairwise masking, the joint view of $\mathcal{A}$ and $\mathcal{B}$ $View_{\mathcal{A},\mathcal{B}}$ is indistinguishable from $View_{\mathcal{A}}$, the view of $\mathcal{A}$:   
$$View_{\mathcal{A}} \approx View_{\mathcal{A},\mathcal{B}}$$
\end{theorem}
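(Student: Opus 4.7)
The plan is to reduce the claim to the pseudorandomness of the randomized Diffie--Hellman exchange, the PRG, and Shamir secret sharing, by constructing a PPT simulator that reproduces $\mathcal{B}$'s portion of the joint view without ever using the fact that $\mathcal{A}$ and $\mathcal{B}$ share a subgroup in $T_{masking}$. First, I would decompose each party's view into (i) the randomized public keys received from the server, (ii) the Shamir shares received from members of her secret-sharing subgroup in $T_{share}$, and (iii) the broadcast messages, comprising the committed tree structure and the masked uploads $y_u$ of all users. By the circular-topology construction in Section \ref{ss:OSA}, $\mathcal{A}$ and $\mathcal{B}$ do not peer with each other in $T_{masking}$, so they compute no shared seed $s_{\mathcal{A}, \mathcal{B}}$ and neither value appears in the other's incoming key list.

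Next, I would proceed via a hybrid argument over $\mathcal{B}$'s view. In $H_0$, we start with the real joint view. In $H_1$, for every randomized key $(s_v^{PK})^{r_{\mathcal{B}, v}}$ that $\mathcal{B}$ receives for a benign peer $v$, we substitute a uniformly random group element; indistinguishability follows from DDH applied to the triple $(g, s_v^{PK}, (s_v^{PK})^{r_{\mathcal{B}, v}})$, relying on the freshness and secrecy of the server-chosen $r_{\mathcal{B}, v}$. In $H_2$, each derived seed $s_{\mathcal{B}, v}$ is now pseudorandom, so its expansion $PRG(s_{\mathcal{B}, v})$ is replaced by a uniform vector by PRG security, which in turn makes the masked upload $y_v$ that $\mathcal{B}$ observes a uniform value independent of the plaintext $x_v$. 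In $H_3$, we invoke the $t$-out-of-$N$ privacy of Shamir's scheme together with the independence of $T_{share}$ from $T_{masking}$ to argue that the at most $t-1$ honest shares reaching $\mathcal{B}$ are distributed uniformly and independently of every honest user's secret. After these hybrids, $\mathcal{B}$'s view depends only on her own inputs, fresh randomness, and public broadcast values, none of which is correlated with whether $\mathcal{A}$ and $\mathcal{B}$ share a subgroup, so the simulator can construct $View_{\mathcal{B}}$ from $View_{\mathcal{A}}$ together with independent coins.

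The hard part is handling \emph{common} pairwise peers of $\mathcal{A}$ and $\mathcal{B}$: when a benign user $v$ sits in the circular neighborhood of both attackers, $\mathcal{A}$ sees $(s_v^{PK})^{r_{\mathcal{A}, v}}$ and $\mathcal{B}$ sees $(s_v^{PK})^{r_{\mathcal{B}, v}}$, and the two attackers may try to correlate these values to recognize the shared peer $v$ and thereby infer co-membership. The resolution hinges on the server sampling $r_{\mathcal{A}, v}$ and $r_{\mathcal{B}, v}$ independently and keeping both secret from the attackers; by a sequence of single-instance DDH hops (or a parallel-DDH argument), the pair $\bigl((s_v^{PK})^{r_{\mathcal{A}, v}}, (s_v^{PK})^{r_{\mathcal{B}, v}}\bigr)$ is computationally indistinguishable from a pair of independent uniform group elements, so no correlation can be detected. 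A symmetric concern arises for secret-sharing peers shared between $\mathcal{A}$ and $\mathcal{B}$ in $T_{share}$; because $T_{share}$ is committed independently of $T_{masking}$ under the protocol of Section \ref{ss:RSG}, conditioning on overlap in $T_{share}$ leaks no information about overlap in $T_{masking}$, and the earlier Shamir-privacy step still applies. Combining these reductions yields $View_{\mathcal{A}} \approx View_{\mathcal{A}, \mathcal{B}}$.
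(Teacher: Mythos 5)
Your proposal is correct in substance but takes a genuinely different route from the paper. The paper's proof fixes the isolated attacker $\mathcal{A}$, first randomizes the information of $\mathcal{A}$'s $\kappa$ honest peers (reusing the Theorem~\ref{TH:3} hybrids), and then performs a \emph{swap}: over Hybrids 2--4 it replaces, one piece at a time, the keys, secret shares, and pairwise-mask material of an honest non-peer user $u$ in the subgroup with those of $\mathcal{B}$, arguing each step leaves $\mathcal{A}$'s view unchanged because the server only forwards a user's (randomized) public key and encrypted shares to that user's own peers. The conclusion is that $View_{\mathcal{A},\mathcal{B}}$ equals the joint view of $\mathcal{A}$ with an honest occupant of $\mathcal{B}$'s slot, hence $\mathcal{B}$ contributes nothing. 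You instead simulate $\mathcal{B}$'s entire incoming view down to public broadcast values plus independent coins, via DDH on the randomized keys, PRG security, and Shamir privacy, and then observe that nothing left correlates with co-membership. What your route buys is an explicit treatment of the one genuinely delicate case --- a benign user $v$ lying in both attackers' circular neighborhoods, where $\mathcal{A}$ holds $(s_v^{PK})^{r_{\mathcal{A},v}}$ and $\mathcal{B}$ holds $(s_v^{PK})^{r_{\mathcal{B},v}}$ and could try to correlate them; your parallel-DDH step makes rigorous what the paper only asserts informally in Section~\ref{ss:OSA} (that the per-pair randomizers $r_{u,v}$ prevent recognition of common peers) and never revisits inside the Appendix~\ref{apdx:type2} hybrids. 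What the paper's route buys is brevity and direct reuse of the Theorem~\ref{TH:3} simulator. Two minor cautions on your write-up: the masked uploads $y_u$ are sent to the server over private channels rather than broadcast, so including them in the adversary's view is a (harmless, conservative) overstatement; and the final claim should be phrased, as in the swap argument, relative to the co-membership bit rather than as literal equality of views, since the joint view trivially contains $\mathcal{B}$'s own inputs and randomness.
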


\begin{proof}
 The detailed proof is in Appendix \ref{apdx:type2}. 
 \end{proof}

\begin{figure*}[hbt!]
\centerline{\includegraphics[width=\linewidth]{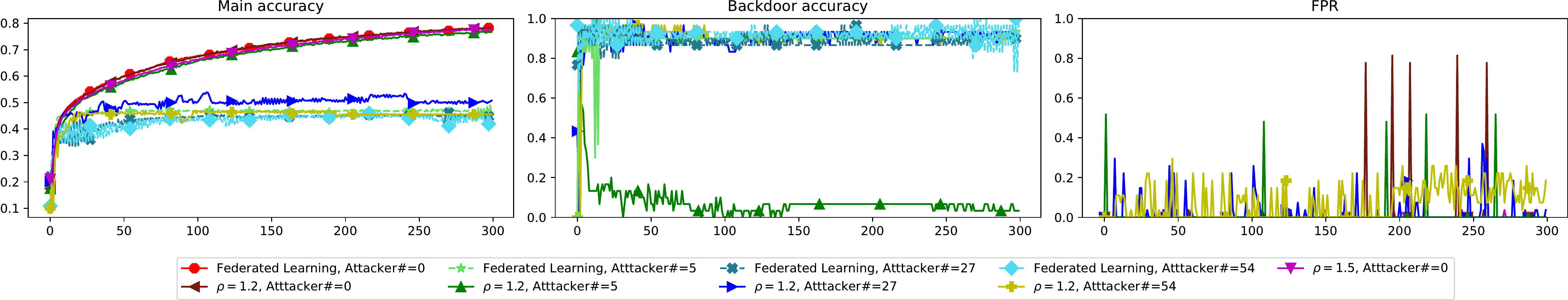}}
\vspace{-4mm}
\caption{Ever-present attack - main task accuracy, backdoor accuracy \& FPR vs. iterations.}\label{fig:Con}
\vspace{-5mm}
\end{figure*}
\section{Backdoor Attack Detection Evaluation}\label{s:Eval}

In this section, we evaluate our protocol against \textbf{Type \rom{2} Attacker} by testing our protocol under state-of-the-art semantic backdoor attack \cite{bagdasaryan2018backdoor} during the training of ResNet-18 network \cite{he2016deep} on CIFAR-10 dataset. The examples of backdoor data are shown in Fig. \ref{fig:sample} and  the effectiveness of targeted attack against conventional secure aggregation \cite{bonawitz2017practical} is shown in Fig. \ref{fig:ori}.

\textbf{Experiment setup:} We simulate a network of $N=1000$ users with non-i.i.d. training data by dividing 50,000 training images using a Dirichlet distribution with hyperparameter 0.9 as in \cite{bagdasaryan2018backdoor}. We assume a $3\times 3$ (heights $\times$ degree) tree which yields 27 subgroups. 

\textbf{Metrics:} We use the following evaluation criteria: (1) \textsl{Detection Rate (DR)}: number of rounds with attackers detected/ number of rounds with attackers presented, (2) \textsl{Correction Rate (CR)}: number of attackers in subgroups that have been labeled as malicious by the server/ total number of attackers, (3) \textsl{False positive rate (FPR)}: number of benign subgroups that have been labeled as malicious by the server/ total number of subgroups and (4) \textsl{After Attack Accuracy:} main \slash backdoor accuracy after the attack
\begin{figure}[hbt!]
\begin{subfigure}[t]{.33\linewidth}
  \centering
  \includegraphics[width=0.9\linewidth]{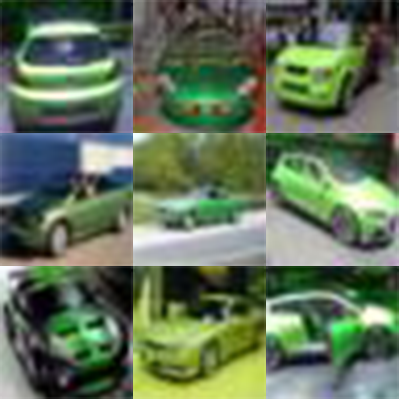}
  \caption{}
\end{subfigure}%
\begin{subfigure}[t]{.33\linewidth}
  \centering
  \includegraphics[width=0.9\linewidth]{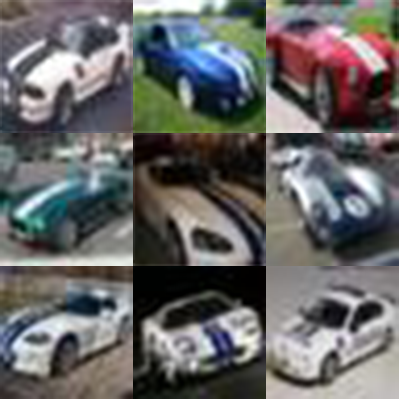}
  \caption{}
\end{subfigure}%
\begin{subfigure}[t]{.33\linewidth}
  \centering
  \includegraphics[width=0.9\linewidth]{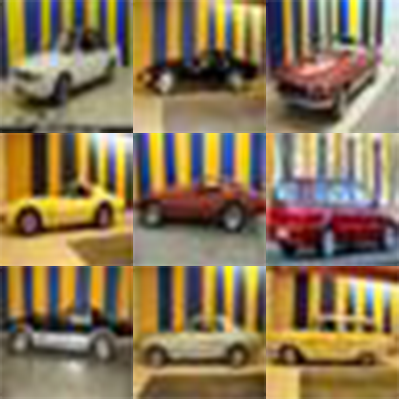}
  \caption{}
\end{subfigure}%
\vspace{-3mm}
\caption{Examples of Backdoor data. Cars with certain attributes are classified as birds. (a) Cars painted in green; (b) Cars with racing stripe and (c) Cars in front of vertical stripes background.}
\label{fig:sample}
\vspace{-4mm}
\end{figure}

\begin{figure}[hbt!]
\centerline{\includegraphics[width=\linewidth]{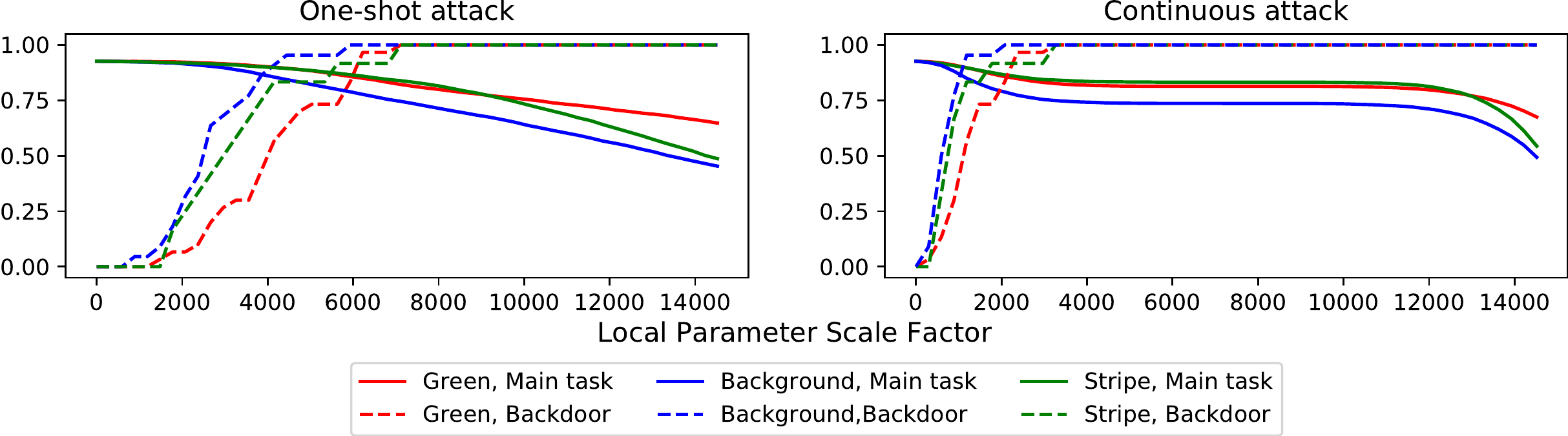}}
\vspace{-3mm}
\caption{After-attack accuracy of three different backdoor data sets of \cite{bagdasaryan2018backdoor} versus the scale factor. 1000 users in total and $2.7\%$  malicious.}
\vspace{-5mm}
\label{fig:ori}
\end{figure}

\begin{figure*}[hbt!]
	\centering
	\begin{subfigure}{0.9\textwidth}
		\includegraphics[width=\linewidth]{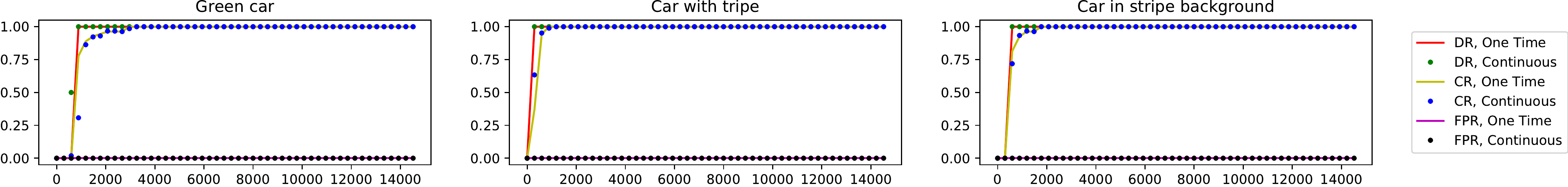}
		\caption{DR, CR and FPR vs. local parameter scale factor.}
	\end{subfigure}
		\begin{subfigure}{0.9\textwidth}
		\includegraphics[width=\linewidth]{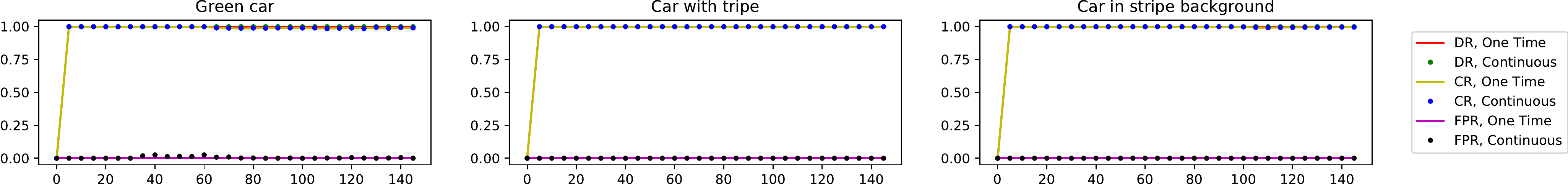}
		\caption{DR, CR and FPR vs. Number of attackers.}
	\end{subfigure}
	\begin{subfigure}{0.9\textwidth}
		\includegraphics[width=\linewidth]{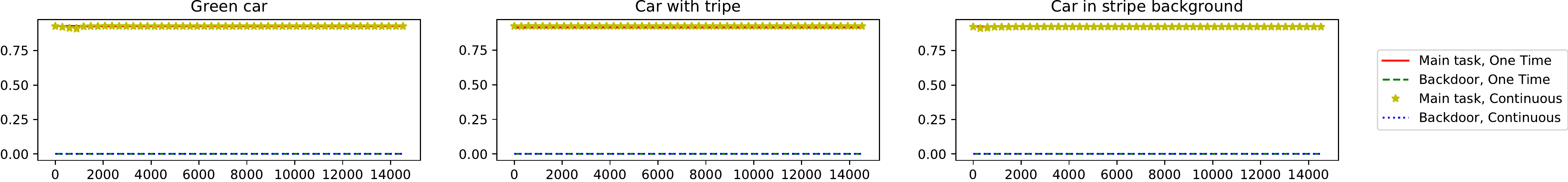}
		\caption{After attack accuracy vs. local parameter scale factor.}
	\end{subfigure}
		\begin{subfigure}{0.9\textwidth}
		\includegraphics[width=\linewidth]{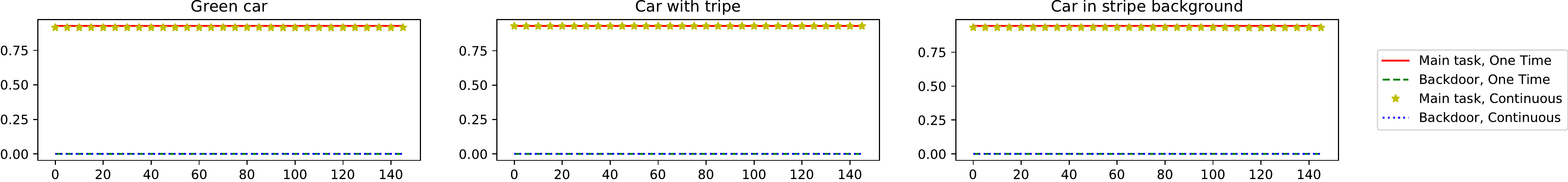}
		\caption{After attack accuracy vs. Number of attackers.}
	\end{subfigure}
	\vspace{-2mm}
	\caption{Performance of our protocol ($\rho=1.2$) against backdoor attack with different backdoor trigger.
	}
    \vspace{-4mm}
	\label{fig:our}
\end{figure*}

\textbf{Performance under ever-present attack:}  In Fig. \ref{fig:Con} we consider the attack that lasts from the very beginning of training to the convergence time, a.k.a. "ever-present attack". We show the main \slash backdoor accuracy and FPR as a function of the number of iterations of training by considering different attacker numbers and thresholds. When no attacker exists, our protocol achieves almost the same training efficiency as conventional federated learning. By comparing our protocol with different threshold expansion factors $\rho$, we note that a tighter threshold will cause false positives occasionally, but the overall training efficiency is not influenced. When ever-present attackers are presented, although the attackers can easily inject backdoor to the global model, the training process is stagnated because the attacker will eliminate updates from benign users as shows in (\ref{eq:atk2}). On the other hand, when the number of attackers is smaller than the number of subgroups (e.g., attacker$\#=5$), our protocol can eliminate all attacker excepts for first few rounds when the standard deviation of updates from benign users is high, and the early injected backdoor will be "forgotten" while the main task converges normally. Although our protocol cannot avoid backdoor injection when a large number of ever-present attackers are presented, the attackers cannot achieve the attack goal either due to the failure of main task convergence.

\textbf{Performance against converging phase attacker:} \cite{xie2019dba}\cite{bagdasaryan2018backdoor} concludes that launching the attack toward the convergence time is beneficial to attackers because the backdoor tends to be "forgotten" easily in early stages and the main task may be jeopardized, which consists with our results above. In Fig. \ref{fig:our}, we show the performance of our protocol against the one-shot attacker (attacking at the last iteration of convergence) and continuous attacker (continuously attacking for 5 iterations before convergence) launched in the converging phase. Our protocol achieves $100\%$ detection rate, near $100\%$ correction rate, and $0\%$ false positive rate in most of the scenarios, which supports our analysis in Section \ref{ss:BAD} that there will be significant distribution disparity between benign subgroup and subgroups with attackers. After-attack accuracy shows that neither the main task was influenced nor the backdoor was injected. However, when the number of attackers exceeds the number of subgroups, the training process is stagnated because most of the subgroups will be replaced.

\section{Related Work}\label{s:Related}

\subsection{Model-Poisoning Attacks to Distributed Machine Learning} In distributed machine learning users and the aggregation server, if any, do not have access to the training data sets possessed by other users. This provides the opportunity for various malicious attacks. Data poisoning  \cite{biggio2012poisoning} and model replacement  \cite{bagdasaryan2018backdoor} are two most common attacks that aim at generating malicious global models at the attacker's will. Specifically, traditional data poisoning attacks \cite{chen2017targeted,gu2017badnets,liu2017trojaning}, which usually target at cloud-centric learning, can influence the behavior of the global model by constructing poisoning samples and uploading poisoned features. According to recent study, the attacker needs to pollute about $20\%$ to $40\%$ of the training data in targeted classes \cite{biggio2012poisoning} or focus on training data with rare features \cite{huang2011adversarial} in order to launch data poisoning attacks. In large-scale machine learning the attacker usually needs to compromise $10\%$ to $50\%$ of the participants \cite{Shen2016A,blanchard2017machine} who continuously upload malicious models. 


Model replacement attackers, on the other hand, leverage the information of the global model and locally construct malicious inputs that modify the global model precisely in the way they desire. To make the attack more effective, attackers can adjust and augment their local model parameters to dominate the global model during the aggregation process. As compared to data poisoning attacks, model poisoning is more efficient and effective. The objective of the attack, even if it is to completely replace the global model, can be achieved in one shot with one or few attackers. As sufficient validation datasets are not always available to the aggregator, detection of such attacks is nontrivial. Existing model replacement detection techniques like Byzantine-robust federated learning \cite {blanchard2017machine} address the problem by evaluating the consistency of models provided by all participating users, either by amplitudes \cite {blanchard2017machine} or by angles \cite{fung2018mitigating} of the received gradient updates. As these techniques need to evaluate models provided by individual users at each iteration, a prohibiting complexity is introduced in large-scale systems with many participants. Moreover, all the above detection techniques require access to plaintext of individual models which is not available in secure aggregation. 

\subsection{Privacy Preserving Machine Learning} Privacy-preserving machine learning \cite{RC-2019} aims to prevent or constrain disclosure of training data or models to unauthorized parties. To this end various techniques have been proposed which can be roughly categorized into secure multi-party computation  (MPC) and differential privacy, based on the underlying techniques they employ. In the area of distributed machine learning, existing MPC-based proposals usually rely on heavy cryptographic primitives include gable circuits, partially or fully homomorphic encryption, oblivious transfer, etc. Recently, promising progresses have been made toward small networks especially for inference tasks \cite{mohassel2017secureml, SecureNN19, GAZELLE18, ABY3-18,Chameleon18,MiniONN17, nikolaenko2013privacy, bost2015machine}. However, there has yet been a practical cryptographic tool that supports efficient training of complex models, e.g., for deep learning tasks.

Differential privacy (DP), on the other hand, focuses on publishing aggregated information with limited disclosure of private information. For example, one approach \cite{abadi2016deep,geyer2017differentially} is to protect data\slash model\slash outputs by allowing users to add zero-mean statistical noise (e.g., Laplace noise) to make data\slash model\slash outputs indistinguishable and aggregate data\slash model\slash outputs with affordable variance. To maintain a desired accuracy, one needs to carefully design the random noise without degrading the level of privacy protection when the DP mechanism is repeated in the training process. As a result, it remains a challenge in DP to maintain an appropriate trade-off between privacy and model quality ( in terms of accuracy loss caused by added noise) especially in deep learning tasks.

\section{Conclusion}\label{s:Conclusion}
This paper presented a novel secure aggregation scheme for federated learning, which supports backdoor detection and secure aggregation simultaneously with oblivious random grouping and partial parameter disclosure. Compare to conventional secure aggregation protocol, our protocol reduces computation complexity from $O(N^2+mN)$ (at user side) and $O(mN^2)$ (at server-side) to $O(N+n^2+m)$ and $O(mN+nN)$, respectively, where $N$ is the total number of users, $n$ the number of users in subgroups and $m$ the size of the model. We validated our design through experiments with 1000 simulated users. Experimental results demonstrate the efficiency and scalability of the proposed design. Moreover, our protocol utilizes adaptive outlier detection to address potential backdoor attack in federated learning. Experimental evaluation on state-of-art adaptive backdoor attack attack\cite{bagdasaryan2018backdoor} combining with possible attack strategies such as continuous attacks show that our protocol can significantly improve the robustness against backdoor attacks.

\section{Acknowledgement}
This project is partially supported by NSF grant ECCS\#1923739.

\bibliographystyle{acm}
\bibliography{citation}

\appendix

\appendix

\section{Proof of Theorem \ref{TH:2}} \label{app:theo2}

\textbf{Theorem \ref{TH:2}}: Let $\{X'_i\}_{i\in [1,n]}$ be $n$ samples randomly selected from $\{X_i\}_{i\in [1,N]}$ in {Theorem \ref{TH:1}} and $Y=y\in [\overline{\mathbb{E}}(X')-R_H/2,\overline{ \mathbb{E}}(X')+R_H/2]$, where $\overline{ \mathbb{E}}(X')$ is the mean vector of $\{X'_i\}_{i\in [1,n]}$. If $R_H=2(1-\sqrt{\frac{n-1}{n}})\varepsilon$. The expectation of the bound of $P(|X_i-y|<\sigma|Y)$ is $\frac{\overline{\sigma}_N^2}{\varepsilon^2}$.

\begin{proof}
Assume $\{X'_i\}_{i\in 1\to n}$ have variance $\overline{\sigma}_n^2$. Using Chebyshev Inequality, we can gets:
\begin{align*}
    P(||X_i-\overline{ \mathbb{E}}(X')||\geq \varepsilon)\leq \frac{\overline{\sigma}_n^2}{\varepsilon},\varepsilon^2>\frac{R_H}{2}\\
\end{align*}
Combining with $Y=y\in [\overline{\mathbb{E}}(X')-R_H/2,\overline{ \mathbb{E}}(X')+R_H/2]$, we can derive the bounds:
\begin{align*}
    P(||X_i-y||\geq \varepsilon&|y\in [\overline{\mathbb{E}}(X')-R_H/2,\overline{ \mathbb{E}}(X')+R_H/2])\\
    &\leq \frac{\overline{\sigma}_n^2}{(\varepsilon-\frac{R_H}{2})^2},\varepsilon>\frac{R_H}{2}\\
\end{align*}
Because $\{X'_i\}_{i\in 1\to n}$ are randomly sampled from $\{X_i\}_{i\in 1\to N}$, according to the distribution of sample variance \cite{walpole1993probability}, the expectation of $\overline{\sigma}_n^2$ is $\frac{n-1}{n}\overline{\sigma}_N^2$. Therefore the expectation of the bounds can be written as:
\begin{align*}
        E(\frac{\overline{\sigma}_n^2}{(\varepsilon-\frac{R_H}{2})^2})=\frac{(n-1)\overline{\sigma}_N^2}{n(\varepsilon-\frac{R_H}{2})^2}
\end{align*}
When we let $R_H=2(1-\sqrt{\frac{n-1}{n}})\varepsilon$ the expectation will be $\frac{\overline{\sigma}_N^2}{\varepsilon^2}$, which is the same in {Theorem \ref{TH:1}}.
\end{proof}

\section{Type \rom{1} Attacker}\label{apdx:type1}

Here we give the formal proof of Theorem \ref{TH:3}. We have the following parameters setting in our protocol execution:  a security parameter $k$ which the underlying cryptographic primitives are instantiated with, a threshold $t$ to limit the number of corrupted parties, a server $S$ and a user set $U$ which contains $N$ users in total, each one is denoted as $u_i$ where $i=\{1,2,...,N\}$. During the execution, we require that the total dropouts to be limited by certain threshold. We denote the input of each user $u$ as $x_u$ and $x_{U'} = \{x_u\}_{u\in U'}$ as the inputs of any subset of users $U'\subseteq U$.

In the protocol execution process, the view of a party consists of its inputs and the randomness and all messages this party received from other parties. If a party drop out during the process, its view remains the same as that before the dropout occurs. 

Given any subset $C\subseteq U \cup \{S\}$ of the parties, let $\rm{Real}_C^{U,t,k}(x_U,U)$ be a random variable which represents the joint view of all parties in $C$ which includes some honest but curious users and an honest but curious server who can combine knowledge of these users. Our theorem shows that the joint view of any subset of honest but curious parties can be simulated given the inputs of these part of users and only the sum of the input from those users within the same group. In particular, we prove that with a certain threshold $t$, the joint view of the server and any set of less than $t$ corrupted party will not leak any information of other honest parties besides the output which is known to all the parties and server. 

\textbf{Theorem \ref{TH:3}}  (Local Model Privacy under Type \rom{1} Attackers)
There exists a PPT simulator \rm{SIM} such that for all $t,U,x_U$ and $C\subseteq U \cup \{S\}$, where $|C\backslash  \{S\}| \leq t$, the output of \rm{SIM} is computationally indistinguishable from the output of $\rm{Real}_C^{U,t,k}(x_U,U)$\:\newline  $$\rm{Real}_C^{U,t,k}(x_U,U) \approx \rm{SIM}_C^{U,t,k}(x_C,z,U)$$
where 
$$z=
\begin{cases}
\sum_{u\in U\backslash C}x_u & \text{if |U| $\geq $ t} \\
\bot & \text{o.w.}
\end{cases}
$$

\begin{proof}We prove the theorem by a standard hybrid argument. We will define a simulator \rm{SIM} through a series of subsequent modifications to the random variable \rm{Real}, so that any two consecutive random variables are computationally indistinguishable.

$Hybrid_0$ This random variable is distributed exactly as \rm{Real}, the joint view of the parties $C$ in a real execution of the protocol.

$Hybrid_1$ In this hybrid, we change the behavior of those honest parties in simulator. Instead of using the secret sharing keys $(C_u^{SK},C_v^{PK})$  produced by Diffie-Hellman key exchange, these parties use a uniformly random encryption key $c_{u,v}$ chosen by the simulator. This hybrid is computationally indistinguishable with the previous one guaranteed by our Server Involved Decisional Diffie-Hellman key exchange.

Recall that in our Server Involved D-H key exchange, the only modification we made compared to the D-H key exchange in original protocol is that, we add a randomness to the public key. In specific, instead of exchanging the public key $\{s_u^{PK},s_v^{PK}\}$, user $u$ and $v$ will exchange $\{(s_u^{PK})^{r_{u,v}},(s_v^{PK})^{r_{u,v}}\}$. We show that for any honest but curious parties who is given this pair of public key can not distinguish the secret $s_{u,v}$ computed by these keys from a uniformly random string. 

Assume there exists a PPT algorithm $\mathcal{A}$ which can break the server involved D-H key exchange with advantage $\epsilon$, then there exists an a PPT algorithm $\mathcal{B}$ which can break DDH assumption with the same advantage. The proof follows the setting in original protocol. Consider a multiplicative cyclic group $G$ of order $q$, and with generator $g$, for $a,b,c$ uniformly random chosen in $\mathbb {Z}_q$ and a randomness $r_{a,b}$ produced by $VRF$, $\mathcal{A}$ can distinguish $g^{{ab}^{r_{a,b}}}$ from $g^c$, then for $\mathcal{B}$, it times the result of D-H key exchange for $r_{a,b}$ times, that is $(g^{ab})^{r_{a,b}}$ and then invoke $\mathcal{A}$, it breaks the DDH with the same advantage.

$Hybrid_2$ In this hybrid, we change the ciphertexts encrypted by the honest parties. Instead of using the encryption of correct shares $S_u^{SK}$ and $b_u$, each honest party just encrypt $0^*$ which has the same length as those shares and send it to other honest parties. This hybrid is computationally indistinguishable with the previous one. This is guaranteed by the IND-CPA security of secret-sharing scheme.

$Hybrid_3$ In this hybrid, we change the shares of $b_u$ given to the corrupted parties to shares of 0. Note that in our design, the view of the corrupted parties contains no more than $\lceil t/n\rceil$ shares of each $b_u$,since the honest users do not reveal $b_u$ for corrupted parties during Unmasking procedure. This hybrid is distributed the same as the previous one. This is guaranteed by the properties of Shamir's secret sharing scheme.

$Hybrid_4$ In this hybrid, we change a part of pariwise masking information of all the parties in $U$. Instead of computing $PRG(b_u)$, the simulator use a uniformly random vector of the same size. Note that in previous hybrid, since $b_u$ is uniformly random and is substituted by shares of 0,the output does not depend on the seed of $PRG$. Thus, when we substitute the result of $PRG$ with a random value. the result is guaranteed to have the same distribution by the security of $PRG$ which means this hybrid is distributed the same as the previous one.

$Hybrid_5$ In this hybrid,we substitute the masked input without the input of the users. That is,instead of sending
\begin{align}
&y_u = x_u+PRG(b_u)+ 
    \smashoperator[r]{\sum_{\forall v\in G_u^s:u<v}}PRG(s_{u,v})-\nonumber\\ 
    &\smashoperator[r]{\sum_{\forall v\in G_u^s:u>v}}PRG(s_{v,u})+
    \smashoperator[r]{\sum\limits_{\forall v\in G^p_u:u<v}}(\Lambda_{R_H}\land PRG(s_{u,v}))-\nonumber\\
    &\smashoperator[r]{\sum\limits_{\forall v\in G^p_u:u>v}}(\Lambda_{R_H}\land PRG(s_{v,u}))\quad (mod\quad R)\label{eq:mask}
\end{align}
users send: 
\begin{align}
&y_u = PRG(b_u)+ 
    \smashoperator[r]{\sum_{\forall v\in G_u^s:u<v}}PRG(s_{u,v})-\nonumber\\ 
    &\smashoperator[r]{\sum_{\forall v\in G_u^s:u>v}}PRG(s_{v,u})+
    \smashoperator[r]{\sum\limits_{\forall v\in G^p_u:u<v}}(\Lambda_{R_H}\land PRG(s_{u,v}))-\nonumber\\
    &\smashoperator[r]{\sum\limits_{\forall v\in G^p_u:u>v}}(\Lambda_{R_H}\land PRG(s_{v,u}))\quad (mod\quad R)\label{eq:mask}
\end{align}
Recall that in the previous hybrid, $PRG(b_u)$ was changed to be uniformly random, so $x_u+PRG(b_u)$ is also uniformly random. Thus, this hybrid is identical to the previous one and further more, all the following hybrids do not depend the values of $x_u$

$Hybrid_6$ In this hybrid, we change the behavior by sending 0 instead of $S_u^{SK}$ generated by the honest parties to all other parties. Similar with $Hybrid_3$,the properties of Shamir's secret sharing scheme guarantee that this hybrid is identical to the previous one.  

$Hybrid_7$In  this hybrid, for a specific honest user $u'$, in order to compute $y_u$, we substitute the key $s_{u,v}$ and $s_{v,u}$ for all other users in $G^p_u$ with two uniformly random value. This hybrid is computationally indistinguishable with the previous one. This is guaranteed by the Decisional Diffie-Hellman assumption.

$Hybrid_8$ In this hybrid, instead of using $PRG(s_{u,v})$ and $PRG(s_{v,u})$ in $\Lambda_{R_H}\land PRG(s_{u,v})$ and $\Lambda_{R_H} \land  PRG(s_{v,u})$, we compute $y_u$ with two uniformly random values with the same size. Recall that The bits of $\Lambda_{R_H}$'s elements are $0$ if they are in range $[R_H,R_U]$, other wise the bits are $1$. This operation will not affect the randomness since all the values are doing an and operation with 0 on high bits, this hybrid is indistinguishable with the previous one. This is guaranteed by the security of $PRG$.

$Hybrid_9$ In  this hybrid, for a specific honest user $u'$, in order to compute $y_u$, we substitute the key $s_{u,v}$ and $s_{v,u}$ for all other users in $G^s_u$ with two uniformly random value. This hybrid is computationally indistinguishable with the previous one. This is guaranteed by the Decisional Diffie-Hellman assumption.

$Hybrid_{10}$ In this hybrid, instead of using $PRG(s_{u,v})$ and $PRG(s_{v,u})$, we compute $y_u$ with two uniformly random values with the same size. Similar to $Hybrid_4$, this hybrid is distinguishable with the previous one. This is guaranteed by the security of $PRG$.

$Hybrid_{11}$ In this hybrid, for all the honest users,instead of sending
\begin{align}
&y_u = x_u+PRG(b_u)+ 
    \smashoperator[r]{\sum_{\forall v\in G_u^s:u<v}}PRG(s_{u,v})-\nonumber\\ +
    &\smashoperator[r]{\sum_{\forall v\in G_u^s:u>v}}PRG(s_{v,u})+
    \smashoperator[r]{\sum\limits_{\forall v\in G^p_u:u<v}}(\Lambda_{R_H}\land PRG(s_{u,v}))-\nonumber\\
    &\smashoperator[r]{\sum\limits_{\forall v\in G^p_u:u>v}}(\Lambda_{R_H}\land PRG(s_{v,u}))\quad (mod\quad R)\label{eq:mask}
\end{align}
users send
\begin{align}
&y_u = w_u+PRG(b_u)+ 
    \smashoperator[r]{\sum_{\forall v\in G_u^s:u<v}}PRG(s_{u,v})-\nonumber\\ 
    &\smashoperator[r]{\sum_{\forall v\in G_u^s:u>v}}PRG(s_{v,u})+
    \smashoperator[r]{\sum\limits_{\forall v\in G^p_u:u<v}}(\Lambda_{R_H}\land PRG(s_{u,v}))-\nonumber\\
    &\smashoperator[r]{\sum\limits_{\forall v\in G^p_u:u>v}}(\Lambda_{R_H}\land PRG(s_{v,u}))\quad (mod\quad R)\label{eq:mask}
\end{align}
where $w_u$ is uniformly random and satisfy $\sum\limits_{u\in U\backslash C}{w_u}=\sum\limits_{u\in U\backslash C}{x_u}=z$. This hybrid is identically distributed as previous one.

Thus, we can define a PPT simulator $\rm{SIM}$ as the last hybrid describe and the output of this simulator is computationally indistinguishable from the output of $\rm{Real}$.
\end{proof}

In the next part we prove that for our random tree structure under honest but curious setting, the final order output by the protocol is indistinguishable from a uniformly random value in the co-domain of selected hash function.

\textbf{Theorem \ref{TH:4}} (Random Tree Structure Secrecy)
There exists a PPT simulator \rm{SIM} such that for all $t,U,x_U$ and $C\subseteq U \cup \{S\}$, where $|C\backslash  \{S\}| \leq t$, the output of $\rm{SIM}$ is indistinguishable from the output of real protocol:   $$\rm{Real}_C^{U,t,k}(R_u,R_S,C_u^{PK}) \approx \rm{SIM}_C^{U,t,k}(R_u,R_S,C_u^{PK})$$

\begin{proof}
we show this by a standard hybrid argument. We will define a simulator \rm{SIM} through a series of modifications to the random variable \rm{Real}, so that any two consecutive random variables are indistinguishable.

$Hybrid_0$ This hybrid is exactly the same as $Real$.

$Hybrid_1$ In this hybrid, we fix an specific honest user $u'$ and change its behavior. Instead of sending $c_u^{PK}$, $u'$ sends a uniformly random key with the appropriate size to the server. Notice that $Id_u = HASH(R_s||c^{PK}_{u}||R_u)$, thus this hybrid is distributed exactly the same as the previous one. This is guaranteed by the properties of hash function.

$Hybrid_2$ In this hybrid, for the specific user defined in previous hybrid, the final output $HASH(\smashoperator[r]{\sum_{\forall v\in G:v\neq u}} Id_v)$ is indistinguishable from the previous one. It is easy to show that if there exists one more honest user except $u'$, the result will guarantee to uniformly distribute by the properties of hash function.
\end{proof}

Moreover, we can prove that the result still holds for malicious users and compromised server under the security of the commitment protocol. After commitment, these corrupted users and server could not make any change on $R_s$ and $R_u$, and then the result of $Id_u$ is indistinguishable from a random value in co-domain of hash function by the properties of hash function. 

\section{Type \rom{2} Attacker}\label{apdx:type2}
 
In this part we give a formal statement to our claim. For a fixed attacker $\mathcal{A}$ in a certain subgroup, we show that even if there is another attacker $\mathcal{B}$ existing in the same subgroup, they can not know they are in the same subgroup which means that the joint view of $\mathcal{A}$ and $\mathcal{B}$ is indistinguishable from the view of $\mathcal{A}$.

\textbf{Theorem 5} (Indistinguishability of Type \rom{2} attackers in the same subgroup)
For any type \rom{2} attacker $\mathcal{A}$ in a certain subgroup, if there is another type \rom{2} attacker $\mathcal{B}$ in the same subgroup but is not peered to $\mathcal{A}$ , $View_{\mathcal{A},\mathcal{B}}$, the joint view of $\mathcal{A}$ and $\mathcal{B}$ is indistinguishable from $View_{\mathcal{A}}$, the view of $\mathcal{A}$:   
$$View_{\mathcal{A}} \approx View_{\mathcal{A},\mathcal{B}}$$

\begin{proof}
We prove the theorem by a standard hybrid argument. Recall that there is $n$ users in a subgroup and each user can only pairwise mask with previous and next $\kappa$ users according to $Id_u$ in our circle-like topology design. We first fix the position of a given isolated attacker $\mathcal{A}$  which is peered with $\kappa$ honest user in this subgroup and an unassigned attacker $\mathcal{B}$.

$Hybrid_0$ This hybrid is exactly the same as our setting. In this hybrid, the view of attacker $\mathcal{A}$ contains the transcript produced by running the protocol with all the pairwise mask peers of $\mathcal{A}$ and all the key pairs owned by the attackers besides the key pairs shared by the secret sharing scheme since we allows the cooperation of attackers.

$Hybrid_1$ In this hybrid, we substitute all the information of these $\kappa$ honest user with random strings and the view of $\mathcal{A}$ remains the same. This part of proof follows the same as our proof for type \rom{1} attacker and we omit it.

$Hybrid_2$ In this hybrid, for a specific user $u$ who is not a peer of $\mathcal{A}$ and let $v$ denote the pairwise mask peers of $u$, we substitute the key $c_u^{SK}$ with $c_{\mathcal{B}}^{SK}$. This modification will not change the view of $\mathcal{A}$ since $\mathcal{A}$ has no information about $c_u^{SK}$ guaranteed by the fact that the server will only send the the public key from user $u$'s pairwise companions.

$Hybrid_3$ In this hybrid, instead of using $s_u^{SK}$ and $b_u$, user $u$ shares the share of $\mathcal{B}$. Similar with $Hybrid_2$, the server will only relay the encrypted secret shares to the peers of user $u$. Thus, for attacker $\mathcal{A}$, the view remains the same. 

$Hybrid_4$ In this hybrid, we substitute the pairwise key pairs $(s_v^{SK},s_u^{PK})$ by the keys of $\mathcal{B}$. Similarly, this will not change the view of $\mathcal{A}$. The view of $\mathcal{A}$ has already contains the key of $\mathcal{B}$ and the circle-like topology guarantee that $\mathcal{A}$ cannot execute the pairwise masking with $u$ so they share no information about the pairwise mask keys. Till now, we have substitute all the information of an honest user $u$ by the information of an attacker $\mathcal{B}$. The joint view of $\mathcal{A}$ and $\mathcal{B}$ contains the transcripts produced by running the protocols. Since we have already prove that the protocol is privacy-preserving. The joint  view of $\mathcal{A}$ and $\mathcal{B}$ makes no difference with joint view of $\mathcal{A}$ and an honest user $u$. Thus, if two attackers are not peered with each other for pairwise masking, any of them cannot get any extra information more than the information it owns.
\end{proof}

\section{Performance analysis: THEORETICAL}\label{apdx:performance_theo}

 In the following analysis, we use $h$ and $d$ to indicate the height and degree of the tree structure, respectively.

\textbf{User's Computational Cost:} $O(N+n^2+m)$. Each user $u$'s computational cost includes: (1) performing $n+2(\kappa+\log\frac{N}{n})$ key agreements of complexity $O(n)$; (2) creation of $t$-out-of-$n$ Shamir's secret shares of $s^{SK}_u$ and $b_u$, which is $O(n^2)$; (3) verification of random full order generation, which is $O(N)$; and (4) performing intra-group masking and inter-group masking on data vector $x_u$, which is $O(m(\kappa+h))=O(m)$. The overall computational cost for each user is $O(N+n^2+m)$.

\textbf{User's Communication Cost:} $O(N+m)$. Each user $u$'s communication cost includes: (1) exchanging keys with all other users in the verification phase, which accounts for $O(N)$; (2) transmitting constant parameters like $R_u$ which is $O(1)$; (3) sending and receiving $2(n-1)$ encrypted secret shares of constant size, which is $O(n)$; (4) sending a masked data vector of size $m$ to the server, which is $O(m)$; (5) sending the server $n$ secrete shares, which is $O(n)$. The overall communication cost for each user is $O(N+m)$.


\textbf{Server's Computational Cost:} $O(mN+nN)$. The server's computational cost includes: (1) reconstructing $N$ $t$-out-of-$n$ Shamir's secrets, which takes a total time of $O(Nn)$ (Note that for each user in the same subgroup, the Lagrange basis polynomials for reconstructing Shamir's secrets remain the same.); (2) random full order generation, which is $O(N)$; (3) for each dropped user, the server needs to unmasking its mask for all survived users in the same group. The expected time cost yields to $O(md(\kappa+h))$, where $d$ is the number of dropped users. For each survived user, the server needs to unmasking its own mask. And the time cost is $O(m(N-d))$. The overall computational cost for the server is $O(mN+nN)$.
\begin{figure}[h]
	\centering
	\begin{subfigure}{\linewidth}
		\captionsetup{width=0.9\textwidth}
		\centering
		\includegraphics[width=\linewidth]{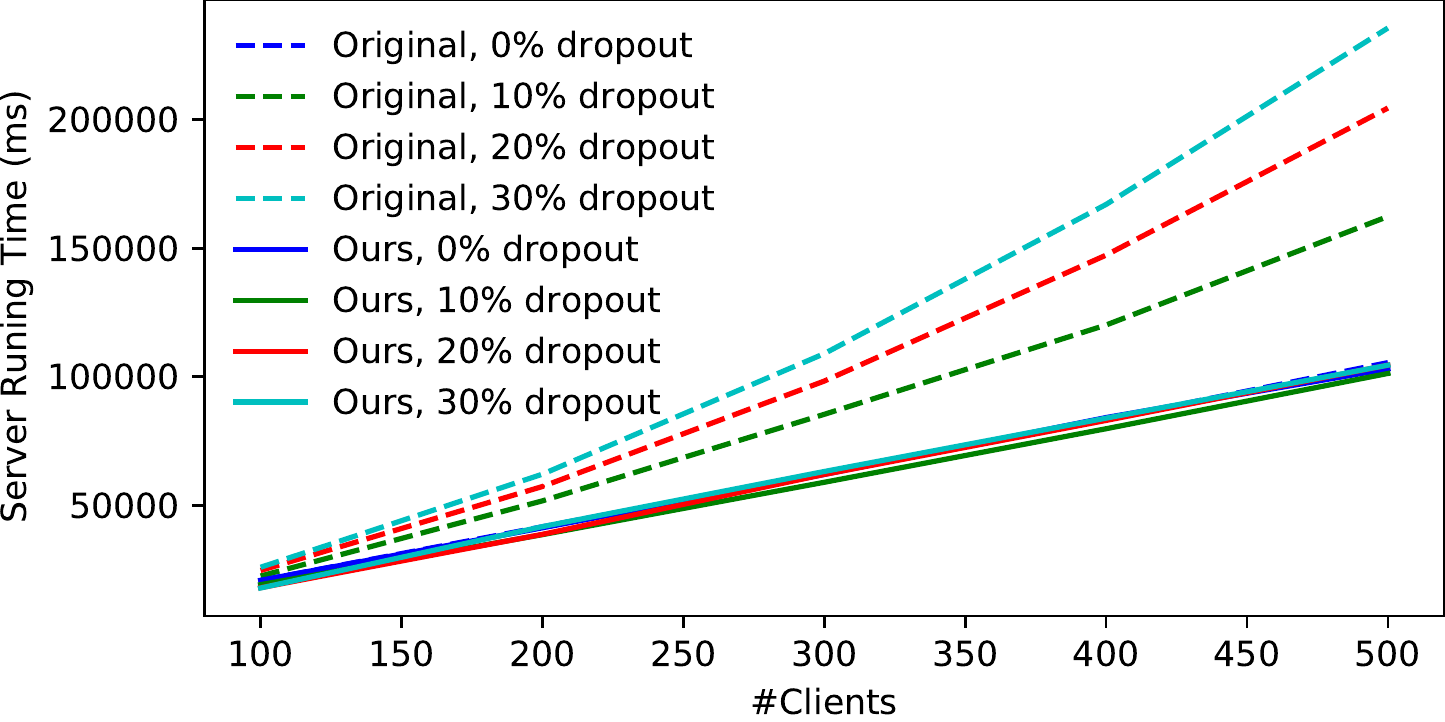}
		\caption{Running time for the server in different clients number. The data vector size is fixed to 100K entries.}
	\end{subfigure}
	\begin{subfigure}{\linewidth}
		\captionsetup{width=0.9\textwidth}
		\centering
		\includegraphics[width=\linewidth]{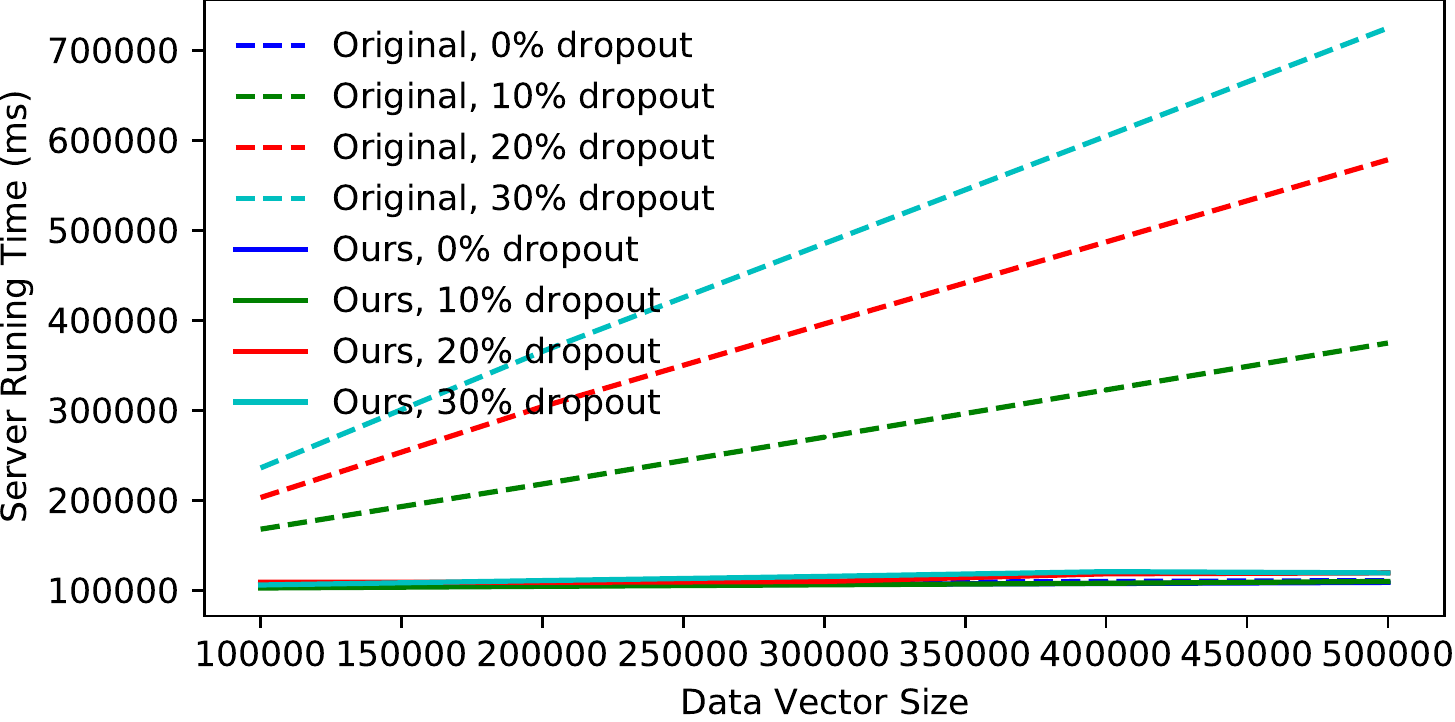}
		\caption{Running time for the server, as the size of the data vector increases. The number of clients is fixed to 500.}
	\end{subfigure}
	\caption{Server Running Time. Solid line represent our Tree Based Secure Aggregation Protocol. Dash line represent original Secure Aggregation Protocol}
	\label{f:server}
\end{figure}
\textbf{Server's communication Cost:} $O(N^2+mN)$. The server's communication cost is mainly dominated by (1) its mediation of pairwise communications between the users, which is $O(N^2)$ and (2) receiving masked data vectors from each user, which is $O(mN)$. The overall communication is $O(N^2+mN)$.
\begin{table*}
	\centering
	\captionsetup{width=0.8\textwidth}
	\caption{users' running time, total data transfer and server running time in different tree structure. The data vector size is fixed to 100K entries, each entry is 10 bytes and the drop rate is fixed to $15\%$.}\label{tab:tree}
	\vspace{-2mm}
	\begin{tabular}{ |c|c|c|c|c| } 
		\hline
		$\#$users& Tree structure(heights$\times$degree) & Running time per user& Total data transfer per user& Server running time\\
		\hline
		1000&$2\times2$&1166 ms&1.51 MB&209283 ms\\
		\hline
		1000&$3\times2$&568 ms&1.40 MB&207076 ms\\
		\hline
		1000&$3\times3$&197 ms&1.33 MB&190610 ms\\
		\hline
		1500&$2\times2$&1877 ms&1.77 MB&312432 ms\\
		\hline
		1500&$3\times2$&899 ms&1.61 MB&310422 ms\\
		\hline
		1500&$3\times3$&298 ms&1.50 MB&287882 ms\\
		\hline
	\end{tabular}
\end{table*}
\begin{figure*}[h]
	\centering
	\begin{subfigure}[t]{0.30\textwidth}
		\captionsetup{width=0.9\textwidth}
		\centering
		\includegraphics[height=1.5in]{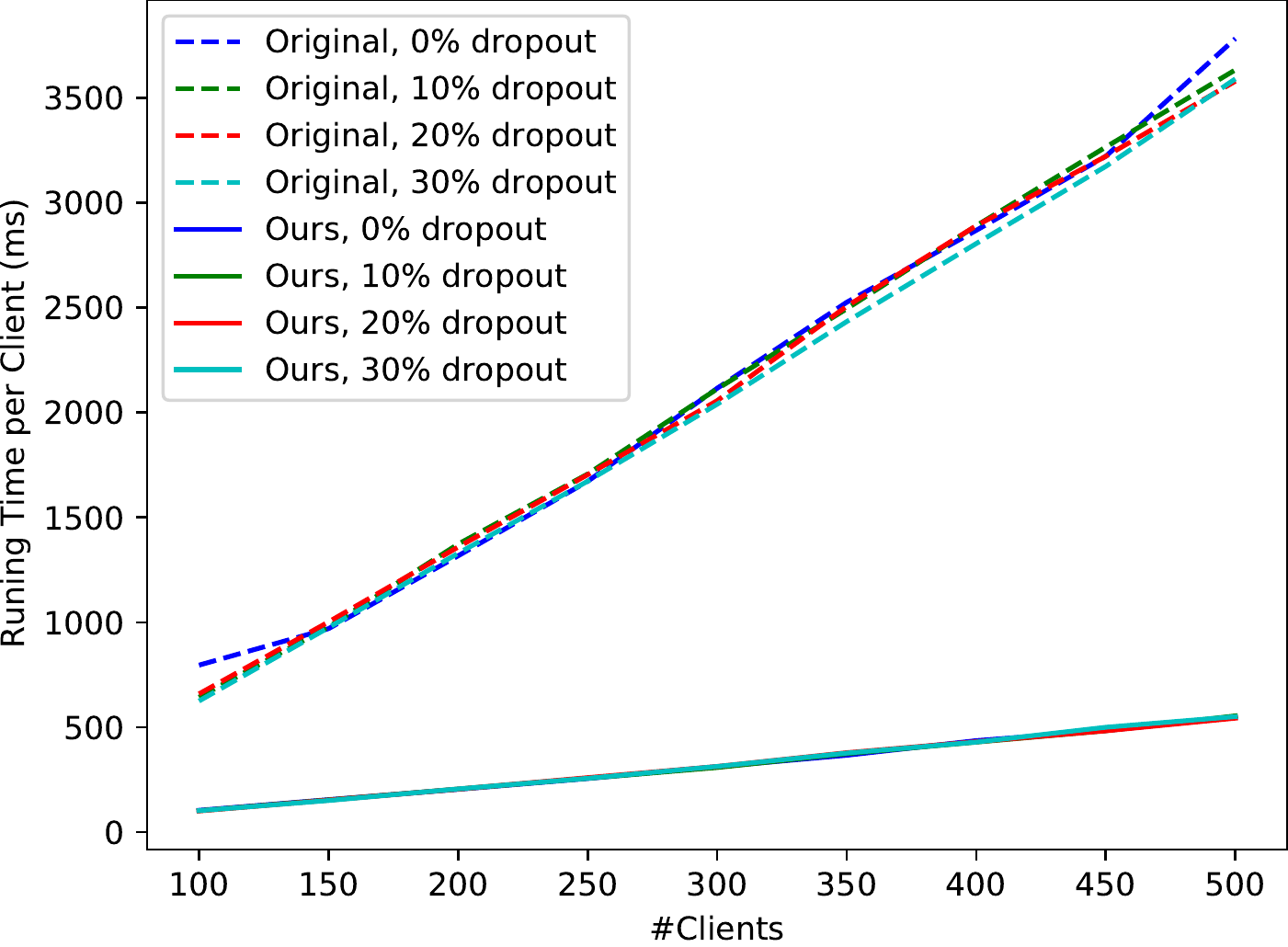}
		\caption{Running time per users, as the number of clients increases. The data vector size is fixed to 100K entries}
	\end{subfigure}%
	\begin{subfigure}[t]{0.30\textwidth}
		\captionsetup{width=0.9\textwidth}
		\centering
		\includegraphics[height=1.5in]{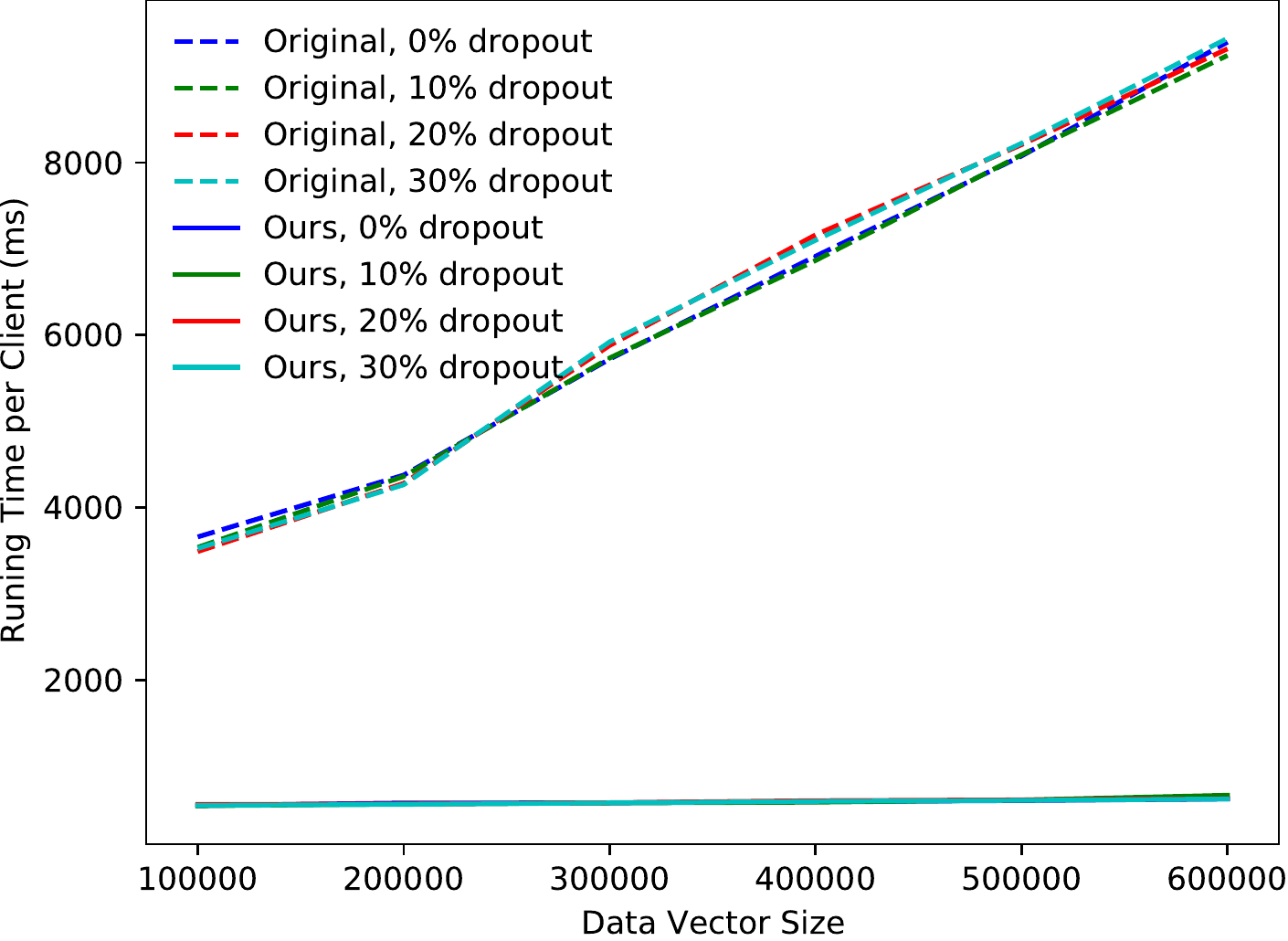}
		\caption{Running time per users, as the data vector size increases. The number of users is fixed to 500}
	\end{subfigure}%
	\begin{subfigure}[t]{0.30\textwidth}
		\captionsetup{width=0.9\textwidth}
		\centering
		\includegraphics[height=1.5in]{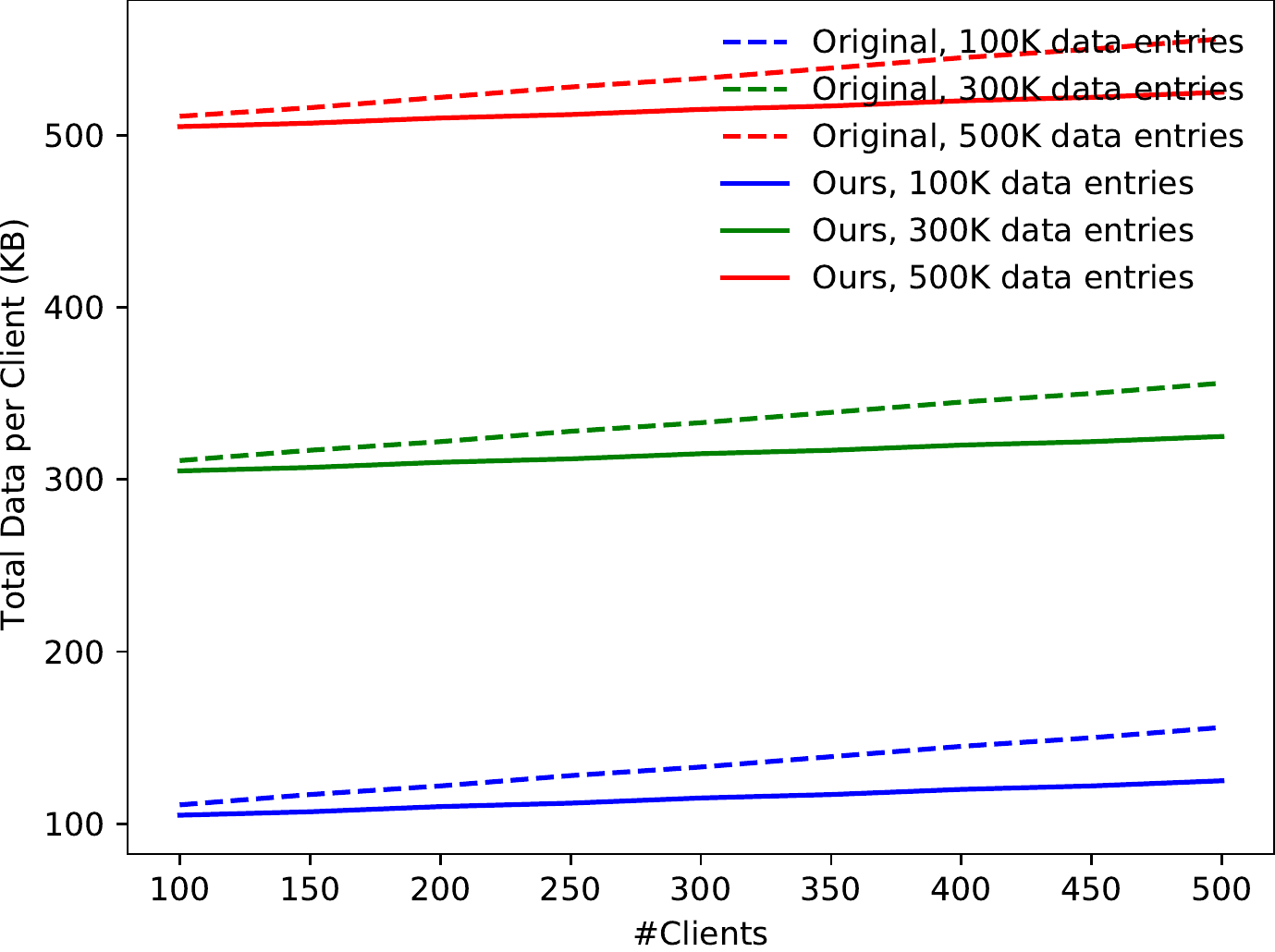}
		\caption{Total data transfer per client, as the number of clients increases. Assumes no dropouts.}
	\end{subfigure}
	\caption{Client Running Time. Solid line represent our Tree Based Secure Aggregation Protocol. Dash line represent original Secure Aggregation Protocol}\label{f:client}
\end{figure*}
\section{Performance analysis: Simulation}\label{apdx:performace}
To evaluate the performance of our protocol, we test our protocol in different tree structures with 1000 and 1500 participants under a $15\%$ drop rate.
Comparison of our protocol with the original secure aggregation protocol is presented in Appendix \ref{apdx:performace}. 
The experiments are performed on a Windows desktop with Intel i5-8400 (2.80 Ghz), with 16 GB of RAM. We assume that users will only drop out of the protocol after sending their secret shares to other users in the same subgroup, but before sending masked input to the server. This is the "worst-case" dropout because the server needs to perform mask recovery for other users in that subgroup. As shown in Table \ref{tab:tree}, when the heights or degree of the tree structure increase, the user's running time and communication costs decrease significantly. Therefore, by adjusting the tree structure, SAFELearning is able to control its complexity by limiting the size of subgroups, which makes it highly scalable.

We compare the overhead of our protocol with Bonawitz et. al. \cite{bonawitz2017practical} (abbr. as ``original") in the server and user side with simulation experiments. In the experiment,  we use $2\times 2$  tree structure and each user $u$ pair mask with previous and next $\kappa=4$ users.

Fig. \ref{f:server} shows the server's running time comparison between two protocols when the number of clients and data vector size increases. Note that, with 0 dropout rate, the server's running time is almost the same between two protocols. This is because as long as all pairwise masks are canceled out, the server of both protocols only needs to unmask $b_u$ for each survived user. But when there are the dropped users, the server in the original protocol needs to unmask $O(N)$ pairwise mask for each dropped user where our protocol only needs to unmask $O(1)$. So, in Fig. \ref{f:server}, the running time growth rate of the original protocol is increased significantly as the dropout rate rises up when our protocol still keeps the same growth rate as 0 dropout rate.

As seen in Fig. \ref{f:client} (a), The users' running time of both protocol increases linearly when the number of users increases. But our protocol has a significantly lower increase rate because the computation complexity of each user in our protocol is $O(N+n^2+m)$ instead of $O(N^2+mN)$. And because the number of pairwise masking needed in our protocol is constant for each user, the user's computation cost won't have significant change when the data vector size increase. So in Fig. \ref{f:client} (b), the running time's increase speed of our protocol is imperceptible compare to the original protocol that needs $N$ pairwise masking operations for each user. In Fig. \ref{f:client} (c), the plot shows that both protocols' communication cost are dominant by the data vector size, but our protocol still has lower communication expansion as the number of clients increase, although this increase is relatively small compared to the impact of increasing the size of the data vector.

\end{document}